\pdfoutput=1
\documentclass[final,letterpaper,oneside,onecolumn,11pt]{IEEEtran}
\usepackage{latexsym,url}
\usepackage{amsmath,amssymb,graphicx}
\usepackage{citesort}
\usepackage{epsf}

\setlength\unitlength{1mm}

\long\def\comment#1{}


\newfont{\bbb}{msbm10 scaled 700}

\newfont{\bb}{msbm10 scaled 1100}


\newcommand{\dv}{{\bf d}}
\newcommand{\ev}{{\bf e}}

\newcommand{\sv}{{\bf s}}
\newcommand{\tv}{{\bf t}}
\newcommand{\uv}{{\bf u}}

\newcommand{\vv}{{\bf v}}
\newcommand{\xv}{{\bf x}}
\newcommand{\yv}{{\bf y}}
\newcommand{\zv}{{\bf z}}






\renewcommand{\det}{{\hbox{det}}}

\usepackage{graphics} 
\usepackage{epsfig} 
\usepackage{amsmath} 
\usepackage{amssymb}  
\usepackage{dsfont}
\newtheorem{thm}{Theorem}
\newtheorem{cor}[thm]{Corollary}
\newtheorem{lem}[thm]{Lemma}

\title{ \LARGE \bf
Joint Physical Layer Coding and Network Coding for Bi-Directional Relaying
}


\author{Makesh Pravin Wilson, Krishna Narayanan, Henry Pfister and Alex Sprintson
\thanks{This work was supported by the National Science Foundation under grant CCR-0515296.  The authors are with the Department of Electrical and Computer Engineering, Texas A\&M University, College Station, TX 77843, USA {\tt\small \{makesh,krn,hpfister,spalex\}@ece.tamu.edu} }
}

\begin{document}

\maketitle

\begin{abstract}
We consider the problem of two transmitters wishing to exchange
information through a relay in the middle.  The channels between the
transmitters and the relay are assumed to be synchronized, average
power constrained additive white Gaussian noise channels with a real
input with signal-to-noise ratio (SNR) of {\sf snr}. An upper bound
on the capacity is $\frac12 \log (1+ {\sf snr})$ bits per
transmitter per use of the medium-access phase and broadcast phase
of the bi-directional relay channel.  We show that using lattice
codes and lattice decoding, we can obtain a rate of $\frac12 \log
(\frac12 + {\sf snr})$ bits per transmitter, which is essentially
optimal at high SNR. The main idea is to decode the sum of the
codewords modulo a lattice at the relay followed by a broadcast
phase which  performs Slepian-Wolf coding with structured codes. For
asymptotically low SNR, joint decoding of the two transmissions at
the relay (MAC channel) is shown to be optimal. We also show that if
the two transmitters use identical lattices with minimum angle
decoding, we can  achieve the same rate of $\frac12
\log(\frac12+{\sf snr})$. The proposed scheme can be thought of as a
joint physical layer, network layer code which outperforms other
recently proposed analog network coding schemes.
\end{abstract}

\maketitle
\section{Introduction, System Model and Problem statement}
\label{sec:introduction} We consider the bi-directional relaying
problem where two users try to exchange information with each other
through a relay in the middle. More specifically, we study a simple
3-node linear Gaussian network as shown in Fig.~\ref{fig:sysmodel}.
Nodes $A$ and $B$ wish to exchange information between each other
through the relay node $R$, however, nodes $A$ and $B$ cannot
communicate with each other directly. Let $\mathbf{u}_A \in
\{0,1\}^k$ and $\mathbf{u}_B \in \{0,1\}^k$, be the information
vectors at nodes $A$ and $B$ (vectors are denoted by bold face
letters such as $\uv$ throughout the paper). The information is
assumed to be encoded into vectors (codewords) $\mathbf{x}_1 \in
\mathbb{R}^n$ and $\mathbf{x}_2 \in \mathbb{R}^n$ at nodes $A$ and
$B$, respectively, and transmitted. We assume that communication
takes place in two phases - a multiple access (MAC) phase and a
broadcast phase, which are briefly described below.

\begin{figure}[h]\center
  \includegraphics[width=5.0in]{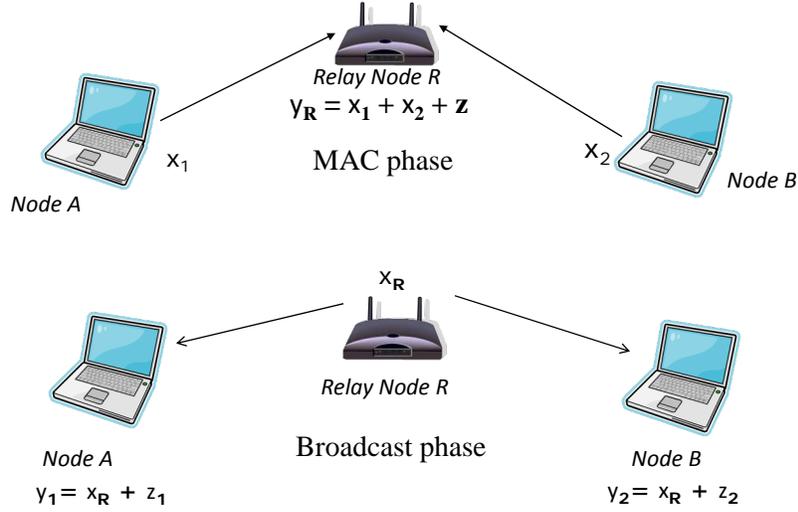}
  \caption{System Model with 3 Nodes}
  \label{fig:sysmodel}
\end{figure}

\paragraph{MAC phase} During the MAC phase, nodes $A$ and
$B$ transmit $\xv_1$ and $\xv_2$ in $n$ uses of an AWGN channel to
the relay. It is assumed that the two transmissions are
perfectly synchronized and, hence, the received signal at the relay
$\yv_R \in \mathbb{R}^n$ is given by
\[
\yv_R = \xv_1 + \xv_2 + \zv
\]
where the components of $\zv$ are independent identically
distributed (i.i.d) Gaussian random variables with zero mean and
variance $\sigma^2$. Further, it is assumed that there is an average
transmit power constraint of $P$ at both nodes and, hence, $E[||\xv_1||^2]
\leq nP$ and $E[||\xv_2||^2] \leq nP$.

\paragraph{Broadcast phase} During the broadcast phase, the relay node transmits
$\xv_R \in \mathbb{R}^n$ in $n$ uses of an AWGN broadcast channel to both nodes
$A$ and $B$. It is assumed that the average transmit power at the relay node is also
constrained to $P$ and that the noise variance at the two nodes is also $\sigma^2$.
Node $A$ forms an estimate of $\mathbf{u}_B$,  namely $\hat{\mathbf{u}}_B$ and
node $B$ forms an estimate of $\mathbf{u}_A$,  namely $\hat{\mathbf{u}}_A$. An error
is said to occur if either $\mathbf{u}_B \neq \hat{\mathbf{u}}_B$ or
$\mathbf{u}_A \neq \hat{\mathbf{u}}_A$, i.e., the probability of error is given by
\[
P_e \stackrel{\Delta}{=} \Pr \left(\{ \mathbf{u}_A \neq \hat{\mathbf{u}}_A\} \bigcup \{ \mathbf{u}_B \neq \hat{\mathbf{u}}_B \}\right)
\]

It is assumed that the communications in the MAC and broadcast
phases are orthogonal. For example, the communications in the MAC
and broadcast phase could be in two separate frequency bands (or in
two different time slots) and, hence, the MAC phase and broadcast
phase do not interfere with each other. To keep the discussion
simple, we will assume that the MAC and broadcast phases occur in
different time slots. This can be easily generalized to the case
when  $2n$ dimensions are available for communication, out of which
$n$ dimensions
 are allocated to the MAC phase and $n$ dimensions are allocated to the broadcast phase.
 Note that the signal-to-noise ratio (SNR) for all transmissions is ${\sf snr} = P/\sigma^2$ and, hence,
 we refer to it as simply the SNR without having to distinguish between the SNRs of the different phases.
Similarly, we restrict our attention to the case when the both nodes
$A$ and $B$ wish to exchange identical amount of information and,
hence, we can simply refer to one exchange rate without having to
distinguish between the rate for $A$ and $B$ separately.

Formally, we define the exchange rate $R_{ex,scheme}$ for an
encoding/decoding scheme as the maximum information rate that can be
exchanged reliably, i.e.,
\[
R_{ex,scheme} \stackrel{\triangle}= \max \frac{k}{n} : P_e
\rightarrow 0 \quad \mathrm{ as } \quad n \rightarrow \infty.
\]
The exchange capacity $C_{ex}$ is then the supremum of
$R_{ex,scheme}$ over all possible encoding/decoding schemes.


\section{Main Results and comments}
\label{sec:mainresults}
We mainly consider the case when the MAC and broadcast phase are both restricted to using exactly $n$ uses of the channel each. For this case, the main results in this paper are
\begin{itemize}

    \item The exchange capacity satisfies
    \[
    C_{ex} \leq \frac{1}{2} \log \left( 1+\frac{P}{\sigma^2} \right)
    \]
    because the MAC and broadcast phase each consist of $n$ AWGN channel uses.

    \item An exchange rate of
    \[
    R_{ex,Lattice} = \frac{1}{2} \log \left(\frac{1}{2}+\frac{P}{\sigma^2}\right)
    \]
    is achievable using the lattice coding scheme with dither and lattice decoding discussed in Section~\ref{sec:lattice}. The same exchange
    rate can also be obtained using the lattice coding scheme (without dither) and minimum angle decoding discussed in Section~\ref{sec:minangledecoding}. At high SNR, these lattice based coding and decoding schemes are nearly optimal because their rates
approach the upper bound.

    \item An exchange rate of
    \[
    R_{ex,JD}  =  \frac{1}{4} \log \left( 1 + \frac{2P}{\sigma^2} \right)
    \]
    is achievable using the joint decoding scheme in Section~\ref{sec:jointdecoding}. At low SNR, these scheme is nearly optimal because the rate approaches the upper bound.

    \item Clearly, any rate of the form
    $\beta R_{ex,JD} + (1-\beta) R_{ex,Lattice}$ is achievable
    for any $0 \leq \beta \leq 1$ by time sharing between the two schemes. This outperforms the recently proposed analog network coding idea in \cite{sachin07} over the entire range of SNR.

\end{itemize}

\section{Related Prior Work}
\label{sec:priorwork} Recently, there has been a significant amount
of work on coding for the bi-directional relay problem
\cite{larsson05}-\cite{feng07}. In \cite{sachin06}, Katti {\em et
al}., showed the usefulness of network coding for this problem.
Although they do not consider the physical layer explicitly in this
work,
 the natural extension of their solution to our problem would work as
follows. The $2n$ channel uses available for signaling would be
split into three slots with $2n/3$ channel uses each. In the first
time slot, $\uv_A$ is encoded using an optimal channel code for the
AWGN channel into $\xv_1$ and transmitted from node $A$. Similarly,
in the second time slot $\uv_B$ is encoded into $\xv_2$ and
transmitted from node $B$. At the relay, $\uv_A$ and $\uv_B$ are
decoded and then the relay forms $\uv_R = \uv_A \oplus \uv_B$ and
encodes $\uv_R$ into $\xv_R$ using an optimal code for the Gaussian
channel and broadcasts into both nodes. The two nodes decode $\uv_R$
and then since they have $\uv_A$ and $\uv_B$, they can obtain
$\uv_B$ and $\uv_A$ at the nodes $A$ and $B$, respectively. Here,
the physical layer and network layer are completely separated and
coding (or mixing of the information) is performed only at the
network layer. In the system model considered in
Fig.~\ref{fig:sysmodel}, the physical layer naturally performs
mixing of the signals from the two transmitters. The schemes that
take advantage of this can be referred to as joint physical layer
coding and network coding solutions.

 One such  scheme called analog network coding was recently proposed in \cite{sachin07}.
 In this case, the MAC phase and broadcast phase use $n$ uses of the AWGN and are orthogonal to each other.
 Gaussian code books are used at the transmitters to encode $\uv_A$ into $\xv_1$ and $\uv_B$ into $\xv_2$, respectively.
  Analog network coding is an amplify (rather than scale) and forward scheme where the received signal at the relay
   during the MAC phase $\yv_R$, is scaled to satisfy the power constraint and transmitted during the broadcast phase, i.e.,
     $\xv_R = \sqrt{\frac{P}{2P+\sigma^2}} \yv_R$.
     It can be seen that this scheme can achieve an exchange capacity of
$\frac{1}{2} \log \left(1+\frac{P/\sigma^2}{\frac{3P+\sigma^2}{P}} \right)$,
which is higher than that achievable with
the pure network coding scheme in \cite{sachin06} for high SNR.

 The schemes proposed in this paper can be thought of as decode and forward schemes which outperform
 the amplify and forward scheme in \cite{sachin07}.
In a very recent work \cite{popovski07}, it is conjectured that an
exchange rate of $\frac{1}{2}\log \left(1 +
\frac{P}{\sigma^2}\right)$ can be achieved, however, no scheme is
given or even conjectured. The scheme in this paper is a
constructive scheme that performs close to the rate conjectured in
\cite{popovski07}. The lattice decoding scheme discussed in
Section~\ref{sec:lattice} is similar to that used by Nazer and
Gastpar  \cite{nazerallerton07} for the problem of estimation the
sum of two Gaussian random variables, but was independently proposed
in the previous version of this paper \cite{narayanan07}.

\section{An Optimal Transmission Scheme for the BSC Channel}
\label{sec:bscchannel}

To motivate our proposed scheme, we first consider a system
where the physical layer channels are all binary symmetric channels.
i.e., $\xv_1 \in \{0,1\}^n$, $\xv_2 \in \{0,1\}^n$ and the signal
received at the relay is
\begin{equation}
\label{eqn:receivedbsc}
\yv_R = \xv_1 \oplus \xv_2 \oplus \ev
\end{equation}
where $\oplus$ denotes binary addition and
$\ev$ is an error sequence whose components are $0$ or $1$ with
probability $q$ and $1-q$ respectively and are i.i.d. Similarly,
during the broadcast phase also let the channel be a BSC channel
with crossover probability $q$.
In this case, an upper bound on the exchange capacity can be seen to be $1-H(q)$
since this is the maximum information that can flow to any of the nodes from the
relay. This can be achieved using the following coding scheme.

In this scheme, the two nodes use identical capacity achieving binary
{\em linear} codes of rate $1-H(q)$. Consider again the received
signal at the relay given in (\ref{eqn:receivedbsc}). Notice that
since $\xv_1$ and $\xv_2$ are codewords of the same linear code,
$\xv_1 \oplus \xv_2$ is also a valid codeword from the same code
which achieves capacity over a BSC channel with crossover
probability $q$. Hence, the relay can decode $\xv_1 \oplus \xv_2$
and transmit the result during the broadcast phase. The nodes $A$
and $B$ can also decode $\xv_1 \oplus \xv_2$ and since they have
$\xv_1$ and $\xv_2$, they can obtain $\xv_2$ and $\xv_1$,
respectively. This scheme achieves an exchange rate of $1-H(q)$ and
is therefore optimal.

\paragraph*{Random codes versus structured codes} It is quite interesting to note that
if random codes, i.e., codes from the Shannon ensemble were used instead of linear codes,
 $\xv_1 \oplus \xv_2$ cannot be decoded at the relay.
  The linearity (or group structure) of the code is exploited to make
  $\xv_1 \oplus \xv_2$ decodable at the relay and, hence,
   structured codes with a group structure outperform random codes for this problem.
Examples of schemes were structured codes outperform random codes have been given in
Korner and Marton \cite{korner} and more recently by Nazer and Gastpar in \cite{nazercompute07}, \cite{nazerstruc07}.

\section{Upper Bound on the Exchange Rate for Gaussian links}
Let us now return to the problem outlined in Section~\ref{sec:introduction}, where
the channels between the nodes and the relay are AWGN channels.
We restrict our attention to schemes where the MAC phase and broadcast phase
are both orthogonal to each other and use $n$ channel uses (or dimensions).
With this restriction, a simple upper bound on the
exchange rate can be obtained as follows. Consider a cut between the relay
node and node $A$. The maximum amount of information that can flow to either
 of the nodes from
the relay is $\frac12 \log \left(1 + \frac{P}{\sigma^2} \right)$.
Hence, the exchange
capacity is upper bounded by
\[
C_{ex,ub} = \frac12 \log \left(1 + \frac{P}{\sigma^2} \right).
\]

We now consider coding schemes and analyze their performance.

\section{Nested Lattice Based Coding Scheme with Lattice Decoding}
\label{sec:lattice}

 As shown in  Section~\ref{sec:bscchannel} for the BSC channel, codes with a group structure
(linear codes) enable decoding of a linear combination (or, sum)
codewords at the relay. This motivates the use of lattice codes for
the Gaussian channel since lattices have a similar group structure
with respect to real vector addition. We begin with some
preliminaries about lattices \cite{erez04}, \cite{erez05}.

An $n$-dimensional lattice $\Lambda$ is a subgroup of $\mathbb{R}^n$
under vector addition over the reals. This implies that if
$\lambda_1,\lambda_2 \in \Lambda$, then $\lambda_1 + \lambda_2 \in
\Lambda$. For any $\xv \in \mathbb{R}^n$, the quantization of $\xv$,
$Q_{\Lambda}(\xv)$ is defined as the $\lambda \in \Lambda$ that is
closest to $\xv$ with respect to Euclidean distance. The fundamental
Voronoi region $\mathcal{V}(\Lambda)$ is defined as
$\mathcal{V}(\Lambda) =\{\xv: Q_{\Lambda}(\xv) = \mathbf{0} \} $.
The mod operation is defined as $(\xv\mod\Lambda) = \xv -
Q_{\Lambda}(\xv)$. This can be interpreted as the error in
quantizing an $\xv$ to the closest point in the lattice $\Lambda$.
The second moment of a lattice is given by
\[
\sigma^2(\Lambda) = \frac{1}{V(\Lambda)} \frac{1}{n} \int_{{\cal
V}(\Lambda)} ||\xv||^2 \ d \xv,
\] and where $V(\Lambda)$, the volume of the fundamental Voronoi region is
denoted by $V(\Lambda) = \int_{{\cal V}(\Lambda)} d \xv$. The
normalized second moment of the lattice is then given by
\[
G(\Lambda) =
\sigma^2(\Lambda)/V(\Lambda)^{1/n}.
\]

Let us define the covering radius of a lattice $\mathcal{R}_u$ as
the radius of the smallest $n$-dimensional hyper sphere containing
the Voronoi region $\mathcal{V}(\Lambda)$. Also let $\mathcal{R}_l$
denote the effective radius of $\mathcal{V}(\Lambda)$, which is the
radius of the $n$-dimensional hyper sphere having the same volume as
$\mathcal{V}(\Lambda)$. Now we can define a Rogers-good Lattice
\cite[(69)]{erez04} as
\begin{equation}
1 \leq \left(\frac{\mathcal{R}_u}{\mathcal{R}_l}\right)^n < c \cdot
n \cdot (\log n)^a,
\end{equation}
where $c$ and $a$ are constants. This implies,
\begin{equation}\frac{1}{n} \log \frac{\mathcal{R}_u}{\mathcal{R}_l} =
\mathcal{O}\left(\frac{1}{n}\right).\end{equation}

 In this work, we are interested in
nested lattices. Formally, we can say the lattice $\Lambda_c$ (the
coarse lattice) is nested in the lattice $\Lambda_f$ (the fine
lattice) if $\Lambda_c \subseteq \Lambda_f$ \cite{zamir02}. Let the
fundamental Voronoi regions of the lattices, $\Lambda_c$ and
$\Lambda_f$  be $\mathcal{V}(\Lambda_c)$ and
$\mathcal{V}(\Lambda_f)$. The existence of nested lattices for which
$G(\Lambda_c) \approx 1/(2 \pi e)$ and $G(\Lambda_f) \approx 1/(2
\pi e)$ has been shown in \cite{zamir02}, \cite{erez05}. The number
of lattice points of the fine lattice in the basic Voronoi region of
the coarse lattice is given by the nesting ratio
$\left.\frac{V(\Lambda_c)}{V(\Lambda_f)}\right.$.

Lattice codes can be used to achieve capacity on the single user
AWGN channel under maximum likelihood (ML) \cite{buda89},
\cite{linder93}, \cite{urbanke98}. More recently, nested lattices
have been shown to achieve the capacity of the single user AWGN
channel under lattice decoding \cite{erez04}, \cite{erez05}. The
main idea in \cite{erez04}, \cite{erez05} is to use the coarse
lattice as a shaping region and the lattice points from the fine
lattice contained within the basic Voronoi region of the coarse
lattice as the codewords. The existence of good nested lattices that
achieve capacity has been shown in \cite{erez04}.

\subsection{Description}

We now describe our encoding and decoding schemes for the
bi-directional relaying problem using nested lattices. The encoding and decoding operations during the MAC
and broadcast phase are explained below. A general schematic is also
shown in Fig.~\ref{fig:lattice}.
\begin{figure}[t]\centering
  \includegraphics[width=4.0in]{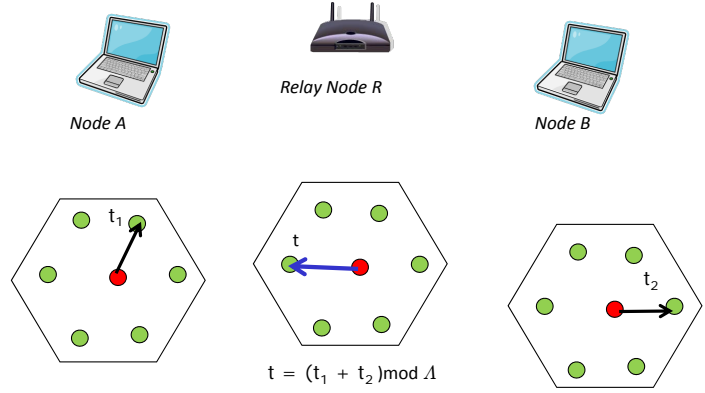}
  \caption{Lattice code based scheme showing the transmitted signals
  $\tv_1$, $\tv_2$ and the decoded signal at the relay $\tv$}\label{fig:lattice}
\end{figure}

\textit{MAC Phase:}
Let there be $k$ information bits in the information vector $\uv_A$
and $\uv_B$ and, hence, the exchange rate is $R = k/n$.
 At node $A$, the information vector $\uv_A$ is mapped onto a fine lattice point
  $\tv_1 \in \{\Lambda_f \cap \mathcal{V}(\Lambda_c)\}$, i.e., the set of all fine
   lattice points in the basic Voronoi region of the coarse lattice is taken to be the code.
   An identical code is used at node $B$ and the information vector $\uv_B$ is mapped onto
the codeword $\tv_2 \in \{\Lambda_f \cap \mathcal{V}(\Lambda_c)\}$. We then
generate dither vectors $\mathbf{d}_1$ and $\mathbf{d}_2$ which are
randomly generated
 $n$ dimensional vectors uniformly distributed over $\mathcal{V}(\Lambda_c)$.
  The dither vectors are mutually independent of each other and are
 known at both the relay node and the nodes $A$ and $B$.
Now node $A$ and node $B$ form the transmitted signal $\mathbf{x}_1$
and $\mathbf{x}_2$ as follows
\begin{equation} \mathbf{x}_1 = (\mathbf{t}_1-\mathbf{d}_1) \mod
\Lambda_c
\end{equation}
\begin{equation} \mathbf{x}_2 = (\mathbf{t}_2-\mathbf{d}_2) \mod
\Lambda_c
\end{equation}
By choosing an appropriate coarse lattice with second moment $P$,
the transmit power constraint will be satisfied at both nodes.
Assuming  perfect synchronization, the relay node receives $\yv_R$
given by
\begin{equation}
\label{eqn:receivedsignalmacphase}
\yv_R = \mathbf{x}_1+\mathbf{x}_2 + \zv
\end{equation} where $\zv$ is
the noise vector whose components have variance $\sigma^2$.

The main idea is that the relay decodes
\[
\mathbf{t} = (\mathbf{t}_1 + \mathbf{t}_2) \mod \Lambda_c
\]
from the received signal $\yv_R$.

\textit{Broadcast Phase:}
In the broadcast phase, the relay node transmits the index of
$\mathbf{t}$
  using a capacity achieving code for the AWGN channel.  The index
  (or, equivalently, $\mathbf{t}$)
  can be obtained at the nodes $A$ and $B$, and using a$\mod \Lambda$ operation,
  $\tv_2$ and $\tv_1$ can can be recovered at nodes $A$ and $B$, respectively.
This scheme can also be thought of as a decode and forward scheme
where the relay decodes a function of $\tv_1$ and $\tv_2$, namely
$\tv = (\tv_1 + \tv_2) \mod \Lambda_c$ and forwards this to the
nodes. Notice however, that the relay will not know either $\tv_1$
or $\tv_2$ exactly, it only knows $\tv$.

\subsection{Achievable rate}

Before we discuss the achievable exchange rate with the lattice encoding
scheme and decoding scheme discussed above, we need a few definitions and
lemmas.

The coding rate of the nested lattice code is defined as the
logarithm of the number of lattice points of the fine lattice in
$\Lambda_f \cap {\cal V}(\Lambda_c)$ which is given by
\begin{equation}
\nonumber \frac1n \log |\Lambda_f \cap {\cal V}(\Lambda_c) | =
\frac1n \log \frac{V(\Lambda_c)}{V(\Lambda_f)}
\end{equation}

\vspace{0.5cm}

\begin{lem}
\label{lemma:uniformdistribution} Let $t_1$ and $t_2$ be independent
random variables which are uniformly distributed over the set of all
fine lattice points in $\mathcal{V}(\Lambda_c)$, i.e.,
$\text{Pr}[t_1 = \lambda] = 1/2^{nR}, \ \forall \lambda \in
\{\Lambda_f \cap {\cal V}(\Lambda_c) \}$ and $\text{Pr}[t_2 =
\lambda] = 1/2^{nR}, \ \forall \lambda \in \{\Lambda_f \cap {\cal
V}(\Lambda_c) \}$, where $R$ is the coding rate of the nested
lattice code.  Then,  $t = (t_1 + t_2) \mod \Lambda$ is also
uniformly distributed over $\{\Lambda_f \cap {\cal V}(\Lambda_c)
\}$.
\end{lem}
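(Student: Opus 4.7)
The plan is to recognize that the statement is a convolution-on-a-finite-group fact dressed up in lattice notation, so the proof reduces to identifying the relevant group and quoting the standard uniformity argument. The key structural observation is that because $\Lambda_c \subseteq \Lambda_f$ are both subgroups of $\RR^n$ under vector addition, the quotient $\Lambda_f / \Lambda_c$ is a well-defined finite abelian group, and the set $\Lambda_f \cap \Vc(\Lambda_c)$ is a canonical system of coset representatives for it (one point from each coset $\lambda + \Lambda_c$ with $\lambda \in \Lambda_f$). In particular $|\Lambda_f \cap \Vc(\Lambda_c)| = V(\Lambda_c)/V(\Lambda_f) = 2^{nR}$.

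Next I would verify that the binary operation $a \oplus b \defines (a+b) \mod \Lambda_c$ turns $\Lambda_f \cap \Vc(\Lambda_c)$ into a finite abelian group, canonically isomorphic to $\Lambda_f/\Lambda_c$. Closure holds because $a+b \in \Lambda_f$ (since $\Lambda_f$ is closed under addition) and $(a+b) - Q_{\Lambda_c}(a+b) \in \Lambda_f$ as well (since $Q_{\Lambda_c}(a+b) \in \Lambda_c \subseteq \Lambda_f$), and the result lies in $\Vc(\Lambda_c)$ by definition of $\mod \Lambda_c$. Associativity, commutativity, the identity $\zerov$, and inverses are all inherited from the quotient-group structure.

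Given this, the lemma is immediate from the standard group-translation fact: for any finite group $(G,\oplus)$ and any fixed $g \in G$, the map $h \mapsto h \oplus g$ is a bijection on $G$. Applied here, for each fixed realization $t_2 = \lambda \in \Lambda_f \cap \Vc(\Lambda_c)$, the map $t_1 \mapsto (t_1 + t_2) \mod \Lambda_c$ is a bijection of $\Lambda_f \cap \Vc(\Lambda_c)$ onto itself. Hence, conditional on $t_2 = \lambda$,
\begin{equation}
\Pr[\,t = \mu \mid t_2 = \lambda\,] \;=\; \Pr[\,t_1 = (\mu - \lambda) \mod \Lambda_c\,] \;=\; 2^{-nR}
\end{equation}
for every $\mu \in \Lambda_f \cap \Vc(\Lambda_c)$, using the independence of $t_1$ and $t_2$ and the assumed uniformity of $t_1$. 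Averaging over $t_2$ gives $\Pr[t = \mu] = 2^{-nR}$, which is the desired uniformity.

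There is essentially no substantive obstacle; the only point requiring care is justifying that $\Lambda_f \cap \Vc(\Lambda_c)$ is genuinely a group under $\oplus$ (equivalently, that it is a valid transversal for $\Lambda_f/\Lambda_c$). Once that identification is made, the uniformity of the convolution of a uniform law with any other law on a finite group is a one-line observation.
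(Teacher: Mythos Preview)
Your proposal is correct and is essentially the same argument as the paper's: condition on one of the two variables, use that $(\lambda_i + \cdot)\mod\Lambda_c$ is a bijection of $\Lambda_f\cap\Vc(\Lambda_c)$ onto itself, and invoke the uniformity and independence of the other variable. The paper carries this out directly as a four-line total-probability computation (conditioning on $t_1$ rather than $t_2$), without the explicit quotient-group framing you supply, but the content is identical.
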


\vspace{0.5cm}

\begin{proof}
\begin{eqnarray}
\text{Pr}[t = \lambda] &=& \sum_{i = 1}^{2^{nR}} \text{Pr}[t_1 =
\lambda_i]\text{Pr}[t = \lambda | t_1 = \lambda_i]\\
&=& \sum_{i = 1}^{2^{nR}} \frac{1}{2^{nR}}  \text{ Pr}[(\lambda_i + t_2)\mod \Lambda_c = \lambda | t_1 = \lambda_i]\\
&=& \sum_{i = 1}^{2^{nR}} \frac{1}{2^{nR}} \text{ Pr}[ t_2 = (\lambda -\lambda_i)\mod \Lambda_c | t_1 = \lambda_i]\\
&=& \sum_{i=1}^{2^{nR}}  \frac{1}{2^{nR}}  \frac{1}{2^{nR}} =
\frac{1}{2^{nR}}
\end{eqnarray}
\end{proof}

\vspace{0.5cm}

We restate some of definitions in \cite{erez04}. Let $\mathcal{R}_u$
be the covering radius of $\Lambda_c$ and let $\mathcal{B}(\mathcal{R}_u)$
be the $n$-dimensional ball of radius $\mathcal{R}_u$. Let $\rho^2$
be the second moment (per dimension) of the smallest ball containing
$\mathcal{V}(\Lambda_c)$, i.e.,
\begin{equation}
\rho^2 = \frac{1}{n} \frac{1}{\|\mathcal{B}(\mathcal{R}_u)\|}
\int_{\mathcal{B}(\mathcal{R}_u)} \|\uv \|^2 d\uv.
\end{equation}

Note that $\mathcal{V}(\Lambda_c)$ has a second moment $P$ and hence
$P < \rho^2$. Let $\mathbf{Z^*}$ be a random variable given by
$\mathbf{Z^*} = (1 - \alpha)(\mathbf{Z_1} + \mathbf{Z_2}) + \alpha
\mathbf{Z}$, with $\mathbf{Z_1} \sim \mathcal{N} ( 0 ,\rho^2
\mathbf{I}^n)$, $\mathbf{Z_2} \sim \mathcal{N} ( 0 ,\rho^2
\mathbf{I}^n)$ and $\mathbf{Z} \sim \mathcal{N} ( 0 ,\sigma^2
\mathbf{I}^n)$, where $\mathbf{I}^n$ is the $n \times n$ identity matrix.
The variance of $\mathbf{Z^*}$ satisfies

\vspace{0.5cm}

\begin{lem}[Modified version of {\cite[Lemma 6]{erez04}}]

\begin{equation}
\frac{n}{n+2} . \frac{2P\sigma^2}{2P + \sigma^2} \leq
\mathrm{Var(Z^*)} = (1 - \alpha)^2(\rho^2 + \rho^2) + \alpha^2
\sigma^2 \leq \left(\frac{R_u}{R_l}\right)^2 \frac{2P\sigma^2}{2P +
\sigma^2},
\end{equation}
 where $\mathcal{R}_u$ is the covering radius of the coarse lattice
$\Lambda_c$ and $\mathcal{R}_l$ is the radius of the $n$-dimensional hyper
sphere whose volume is equal to the volume of the basic Voronoi
region $\mathcal{V}(\Lambda_c)$.

\end{lem}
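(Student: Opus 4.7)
The plan is to treat the middle expression as a definition (with $\alpha$ chosen to minimize the variance, MMSE style) and then sandwich $\rho^2$ between two expressions involving $P$, after which monotonicity of $f(x) = \tfrac{2x\sigma^{2}}{2x+\sigma^{2}}$ finishes everything.

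First, since $Z_1, Z_2, Z$ are mutually independent zero-mean Gaussians (per coordinate), the variance of $Z^{*} = (1-\alpha)(Z_1+Z_2) + \alpha Z$ is immediately $(1-\alpha)^{2}(\rho^{2}+\rho^{2}) + \alpha^{2}\sigma^{2}$, giving the middle equality. I would then select the MMSE-optimal $\alpha^{*} = \tfrac{2\rho^{2}}{2\rho^{2}+\sigma^{2}}$ (this is the choice implicitly used throughout the Erez--Zamir-style analysis); substituting yields the closed form $\mathrm{Var}(Z^{*}) = \tfrac{2\rho^{2}\sigma^{2}}{2\rho^{2}+\sigma^{2}} = f(\rho^{2})$, which is the only expression I need to bound.

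Next I would pin down $\rho^{2}$ in terms of $P$. Using the standard formula for the second moment per dimension of a ball of radius $r$ in $\mathbb{R}^{n}$, namely $\tfrac{r^{2}}{n+2}$, I get $\rho^{2} = \tfrac{\mathcal{R}_{u}^{2}}{n+2}$. For the upper bound on $\rho^{2}$: the $n$-ball of radius $\mathcal{R}_{l}$ has the same volume as $\mathcal{V}(\Lambda_c)$ and balls minimize the second moment among sets of fixed volume, hence $\tfrac{\mathcal{R}_{l}^{2}}{n+2} \le P$, which rearranges to $\rho^{2} \le \bigl(\tfrac{\mathcal{R}_{u}}{\mathcal{R}_{l}}\bigr)^{2} P$. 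For the lower bound on $\rho^{2}$: every $u \in \mathcal{V}(\Lambda_c)$ satisfies $\|u\| \le \mathcal{R}_{u}$, so $nP = \tfrac{1}{V(\Lambda_c)}\int_{\mathcal{V}(\Lambda_c)} \|u\|^{2}\, du \le \mathcal{R}_{u}^{2}$, giving $\rho^{2} \ge \tfrac{n}{n+2}\, P$.

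Finally I would push these two bounds through $f$. Since $f$ is increasing and satisfies the elementary inequality $f(cx) \le c f(x)$ for $c \ge 1$ (just compare the two denominators $2cx+\sigma^{2}$ and $2x+\sigma^{2}$), I get
\begin{equation}
f(\rho^{2}) \;\le\; f\!\left(\tfrac{\mathcal{R}_{u}^{2}}{\mathcal{R}_{l}^{2}} P\right) \;\le\; \tfrac{\mathcal{R}_{u}^{2}}{\mathcal{R}_{l}^{2}}\, \tfrac{2P\sigma^{2}}{2P+\sigma^{2}}. \nonumber
\end{equation}
For the lower bound, plugging $\tfrac{n}{n+2}P$ into $f$ yields $\tfrac{2nP\sigma^{2}}{2nP+(n+2)\sigma^{2}}$; comparing with $\tfrac{n}{n+2}\cdot\tfrac{2P\sigma^{2}}{2P+\sigma^{2}} = \tfrac{2nP\sigma^{2}}{2(n+2)P+(n+2)\sigma^{2}}$ just requires observing that $2nP + (n+2)\sigma^{2} \le 2(n+2)P + (n+2)\sigma^{2}$, which is trivial, so $f(\rho^{2}) \ge \tfrac{n}{n+2}\cdot\tfrac{2P\sigma^{2}}{2P+\sigma^{2}}$.

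The only step that required any thought is the lower bound on $\rho^{2}$, because one might initially hope for $\rho^{2}\ge P$; the correct statement loses a factor $\tfrac{n}{n+2}$, which is exactly what shows up on the left of the claimed inequality. Everything else is Gaussian algebra plus the monotonicity of $f$.
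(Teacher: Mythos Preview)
Your argument is essentially correct and parallels the paper's: both reduce to sandwiching $\rho^{2}$ via
\[
\frac{n}{n+2}\,P \;\le\; \rho^{2} \;\le\; \left(\frac{\mathcal{R}_u}{\mathcal{R}_l}\right)^{2} P,
\]
which is exactly what the paper invokes from Erez--Zamir (their equations 116--118). Your derivations of these two inequalities are clean and correct.

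The one substantive discrepancy is the choice of $\alpha$. You take the MMSE value $\alpha^{*}=\tfrac{2\rho^{2}}{2\rho^{2}+\sigma^{2}}$, whereas the paper explicitly fixes $\alpha=\tfrac{2P}{2P+\sigma^{2}}$, and this is the value used consistently in the surrounding Lemma~3 and Theorem~4. With the paper's $\alpha$, $\mathrm{Var}(Z^{*})$ is \emph{affine} in $\rho^{2}$, so the two $\rho^{2}$ bounds transfer directly (one just checks $1\le (\mathcal{R}_u/\mathcal{R}_l)^{2}$ and $\tfrac{n}{n+2}\le 1$ at the end) without needing your sublinearity trick $f(cx)\le c f(x)$. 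Your route does produce the stated inequalities for \emph{your} $\alpha$, but strictly speaking you are bounding a different variance than the one the paper goes on to use: the lower bound transfers (your minimized variance is $\le$ the paper's), but the upper bound does not automatically carry over. Fixing this is trivial once the $\rho^{2}$ bounds are in hand --- just plug $\alpha=\tfrac{2P}{2P+\sigma^{2}}$ into the middle expression and use linearity in $\rho^{2}$ --- so the gap is cosmetic rather than conceptual.
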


\vspace{0.5cm}

\begin{proof}
The proof of the above lemma closely follows the proof in \cite{erez04}. Equations
 \cite[116-118]{erez04} can be used to get a bound on Var$(Z_1)$ and
Var$(Z_2)$, i.e., $\rho^2$. Choosing $\alpha = \frac{2P}{2P + \sigma^2}$, the
modified version of \cite[Lemma 6]{erez04} can be proved.
\end{proof}

Let $\mathbf{X}_1$ and $\mathbf{X}_2$ be two independent
random variables which are uniformly distributed over ${\cal
V}(\Lambda_c)$ and let
$\mathbf{Z} \sim \mathcal{N} (
\mathbf{0} ,\sigma^2 \mathbf{I})$ be an $n$-dimensional Gaussian vector
independent of $\mathbf{X}_1$ and $\mathbf{X}_2$.
Further, let $\mathbf{Z}_{eq} =
(1-\alpha) (\mathbf{X}_1 + \mathbf{X}_2) + \alpha \mathbf{Z}$, where $\alpha =
\frac{2P}{2P+\sigma^2}$. Notice that $\mathbf{Z}_{eq}$ is not Gaussian.
We next state a Lemma which is a modified
version of  \cite[Lemma
11]{erez04} by Erez and Zamir which essentially shows that there exist
good lattices for which $\mathbf{Z}_{eq}$ can be well approximated by a Gaussian
of nearly the same variance and the approximation gets better as $n \rightarrow \infty$.

\vspace{0.5cm}

\begin{lem}[Modified version of {\cite[Lemma 11]{erez04}}]
 Let $\Lambda_c$ be a lattice which is both Rogers-good
as well as Poltyrev-good. Then, for any $\xv$,
\begin{equation}f_{Z_{eq}}(\xv) < e^{\epsilon_1(\Lambda_c)n}
f_{\mathbf{Z^*}}(\xv),\end{equation} where
$\epsilon_1(\Lambda_c) \rightarrow 0$ as $n \rightarrow \infty$.

\end{lem}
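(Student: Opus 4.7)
The plan is to follow the Erez--Zamir approach of \cite{erez04} closely, replacing the uniform law on $\mathcal{V}(\Lambda_c)$ by the uniform law on the covering ball $\mathcal{B}(\mathcal{R}_u)$, and then replacing that by the Gaussian $\mathcal{N}(0,\rho^2 \Id^n)$. The key observation is that both $Z_{eq}$ and $Z^*$ are obtained by the same affine combination $(1-\alpha)(\cdot + \cdot) + \alpha Z$, so it suffices to prove a pointwise inequality at the level of the marginals of $X_1,X_2$ versus $Z_1,Z_2$ and then push it through convolution.

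The first step is to write, for any $\xv\in\RR^n$,
\begin{equation}
f_{X_i}(\xv) \;=\; \frac{1}{V(\Lambda_c)}\,\mathbf{1}_{\mathcal{V}(\Lambda_c)}(\xv) \;\leq\; \frac{V(\mathcal{B}(\mathcal{R}_u))}{V(\Lambda_c)}\cdot \frac{1}{V(\mathcal{B}(\mathcal{R}_u))}\,\mathbf{1}_{\mathcal{B}(\mathcal{R}_u)}(\xv),
\end{equation}
using that $\mathcal{V}(\Lambda_c)\subseteq \mathcal{B}(\mathcal{R}_u)$. Rogers-goodness implies $V(\mathcal{B}(\mathcal{R}_u))/V(\Lambda_c) = (\mathcal{R}_u/\mathcal{R}_l)^n \leq e^{n\epsilon_R(\Lambda_c)}$ with $\epsilon_R(\Lambda_c) \to 0$. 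Next, compare the uniform density on $\mathcal{B}(\mathcal{R}_u)$ to the Gaussian density of $Z_{i}\sim\mathcal{N}(0,\rho^2\Id^n)$: inside the ball, $\|\xv\|^2\leq \mathcal{R}_u^2 = (n+2)\rho^2$, so $f_{Z_i}(\xv) \geq (2\pi\rho^2)^{-n/2} e^{-(n+2)/2}$, while Stirling applied to the ball volume formula gives $V(\mathcal{B}(\mathcal{R}_u)) \geq (2\pi e\rho^2)^{n/2}/\mathrm{poly}(n)$. Dividing, the ratio $f_{U}(\xv)/f_{Z_i}(\xv)$ is bounded by a polynomial in $n$, i.e., by $e^{n\epsilon_G(n)}$ with $\epsilon_G(n)\to 0$. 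Outside the ball the ratio is $0$, so in all cases
\begin{equation}
f_{X_i}(\xv) \;\leq\; e^{n(\epsilon_R(\Lambda_c)+\epsilon_G(n))}\,f_{Z_i}(\xv).
\end{equation}

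Finally, because both bounds are pointwise and the densities are non-negative, convolution preserves the inequality: $f_{X_1+X_2}(\yv) = \int f_{X_1}(\av) f_{X_2}(\yv-\av)\,d\av \leq e^{2n(\epsilon_R+\epsilon_G)} f_{Z_1+Z_2}(\yv)$. Scaling by $(1-\alpha)$ and convolving with $f_{\alpha Z}$ (which is identical on both sides) gives $f_{Z_{eq}}(\xv) \leq e^{n\epsilon_1(\Lambda_c)} f_{Z^*}(\xv)$ with $\epsilon_1(\Lambda_c) := 2(\epsilon_R(\Lambda_c)+\epsilon_G(n))\to 0$, which is the claim. I expect the trickiest bookkeeping step to be the second one, namely verifying that the polynomial prefactor in the ball-vs-Gaussian density ratio is actually subexponential uniformly in $\xv$ (the Gaussian density is smallest at the boundary of $\mathcal{B}(\mathcal{R}_u)$, which is precisely where one must apply $\|\xv\|^2 \leq (n+2)\rho^2$); the Rogers-good step and the propagation through convolution are then essentially bookkeeping.
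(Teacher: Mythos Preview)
Your proposal is correct and follows essentially the same route as the paper, which simply defers to the Erez--Zamir argument: bound $f_{X_i}/f_{Z_i}$ pointwise via the Rogers-good covering ratio and the ball-versus-Gaussian comparison (this is precisely \cite[(200)]{erez04}), then propagate the marginal bound through the affine combination by convolution. You have in fact spelled out more detail than the paper's own proof, which only cites the relevant equations from \cite{erez04}; note also that only Rogers-goodness is actually used here---the Poltyrev-good hypothesis enters later, in Theorem~\ref{NestedLatticeTheorem}.
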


\vspace{0.5cm}

\begin{proof}
 To prove the modified version of \cite[Lemma 11]{erez04},
  we can repeat the steps in \cite{erez04}, and equation
\cite[200]{erez04} can be restated with the notation in this paper
as
\begin{equation}\label{eqn:lem111} \frac{1}{n} \log
\frac{f_{\mathbf{x_1}}(\mathbf{x})}{f_{\mathbf{Z_1}}(\xv)} =
\epsilon_1(\Lambda_c) = \mathcal{O} \left(\frac{1}{n}\right).\end{equation}
Similarly for $\mathbf{Z}_2$, we get
\begin{equation}\label{eqn:lem112}
\frac{1}{n} \log \frac{f_{\mathbf{X_2}}(\xv)}{f_{\mathbf{Z_2}}(\xv)}
= \epsilon_1(\Lambda_c) = \mathcal{O}
\left(\frac{1}{n}\right).\end{equation} Combining (\ref{eqn:lem111})
and (\ref{eqn:lem112}) and also the definition of $\mathbf{Z^*}$ as
$\mathbf{Z^*} = (1 - \alpha)(\mathbf{Z_1} + \mathbf{Z_2}) + \alpha
\mathbf{N}$, we can get the proof of the modified version of
\cite[Lemma 11]{erez04}.
\end{proof}

\vspace{0.5cm}

We next state the
theorem which is an application of the above Lemmas. This is
very similar to \cite[Theorem 5]{erez04}.
\vspace{0.5cm}

\begin{thm}[modified version of {\cite[Theorem 5]{erez04}}]
\label{NestedLatticeTheorem}
Let $\mathbf{x}_1$ and $\xv_2$ be realizations of two independent
random variables which are uniformly distributed over ${\cal
V}(\Lambda_c)$ and let $\mathbf{z}$ be a realization of an
$n$-dimensional Gaussian vector $\mathbf{Z} \sim \mathcal{N} (
\mathbf{0} ,\sigma^2 \mathbf{I})$. Further, let $\zv_{eq} =
(1-\alpha) (\xv_1 + \xv_2) + \alpha \zv$, where $\alpha =
\frac{2P}{2P+\sigma^2}$. For any coding rate $R < \frac12
\log(\frac12 + \frac{P}{\sigma^2})$, there exists a sequence of
 $n$-dimensional nested lattice pairs
$(\Lambda_f^{(n)}, \Lambda_c^{(n)})$ of coding rate $R$ for which
\[
\Pr \{ \mathbf{Z}_{eq} \not\in {\cal{V}}_{f}^{(n)} \} \rightarrow 0,
\ \mbox{as}, \  n\rightarrow \infty.
\]
\end{thm}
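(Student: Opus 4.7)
The plan is to reduce the two-user Gaussian bi-directional MAC with dithered lattice codes to an equivalent \emph{single-user} mod-$\Lambda_c$ channel in the style of Erez--Zamir, and then invoke the modified lemmas stated above together with Poltyrev-goodness of the fine lattice. The crucial new ingredient is that the two transmit signals $\xv_1,\xv_2$ play the role of a single effective input $(\xv_1+\xv_2)$ with power $2P$, but the decoded symbol is the \emph{sum} $\tv = (\tv_1+\tv_2) \mod \Lambda_c$, which by Lemma~\ref{lemma:uniformdistribution} is uniform over $\Lambda_f \cap \Vc(\Lambda_c)$.

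First I would form, at the relay, the MMSE-like quantity
\[
\yv' \;=\; \bigl(\alpha\, \yv_R + \dv_1 + \dv_2\bigr) \mod \Lambda_c,
\qquad \alpha = \tfrac{2P}{2P+\sigma^2}.
\]
Using $\xv_i + \dv_i \equiv \tv_i \pmod{\Lambda_c}$ and the distributivity of the mod operation, a short calculation gives
\[
\yv' \;=\; \bigl(\tv + \zv_{eq}\bigr) \mod \Lambda_c,
\qquad \zv_{eq} = -(1-\alpha)(\xv_1+\xv_2) + \alpha \zv,
\]
so this is exactly a single-user mod-$\Lambda_c$ channel with input $\tv \in \Lambda_f\cap\Vc(\Lambda_c)$ and additive noise $\zv_{eq}$. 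Lattice decoding then returns $\tv$ correctly whenever $\zv_{eq} \in \Vc(\Lambda_f)$, so the error probability is at most $\Pr\{\Zm_{eq} \notin \Vc(\Lambda_f)\}$.

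Next I would bound this probability using the two modified lemmas. The density-domination Lemma gives
\[
\Pr\{\Zm_{eq} \notin \Vc(\Lambda_f)\} \;<\; e^{\epsilon_1(\Lambda_c)\, n}\; \Pr\{\Zm^* \notin \Vc(\Lambda_f)\},
\]
where $\Zm^*$ is Gaussian and, by the modified Lemma~6, has per-dimension variance sandwiched between $\tfrac{n}{n+2}\tfrac{2P\sigma^2}{2P+\sigma^2}$ and $(R_u/R_l)^2\tfrac{2P\sigma^2}{2P+\sigma^2}$. For a Rogers-good $\Lambda_c$ the ratio $R_u/R_l\to 1$ subexponentially, so $\mathrm{Var}(Z^*) \to \tfrac{2P\sigma^2}{2P+\sigma^2}$. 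Invoking Poltyrev-goodness of $\Lambda_f$, the Gaussian-escape probability $\Pr\{\Zm^*\notin\Vc(\Lambda_f)\}$ decays exponentially in $n$ with an exponent that is positive whenever $V(\Lambda_f)^{2/n} > 2\pi e\cdot\mathrm{Var}(Z^*)$, and this exponent dominates the subexponential factor $e^{\epsilon_1(\Lambda_c)n}$.

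Finally I would convert the volume condition to a rate condition. Since $\Lambda_c$ is Rogers-good with second moment $P$, $G(\Lambda_c) \to 1/(2\pi e)$ and so $V(\Lambda_c)^{2/n} \to 2\pi e P$. Hence any rate
\[
R \;=\; \tfrac{1}{n}\log \tfrac{V(\Lambda_c)}{V(\Lambda_f)}
\;<\; \tfrac{1}{2}\log \tfrac{V(\Lambda_c)^{2/n}}{2\pi e\cdot\mathrm{Var}(Z^*)}
\;\longrightarrow\; \tfrac{1}{2}\log\!\left(\tfrac{1}{2}+\tfrac{P}{\sigma^2}\right)
\]
is achievable in the limit. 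Existence of a simultaneously Rogers-good, Poltyrev-good pair $(\Lambda_c^{(n)},\Lambda_f^{(n)})$ with a prescribed nesting ratio is exactly the content of the nested-lattice existence result in \cite{erez04}, which I would cite directly.

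The main obstacle, and what separates this from a verbatim application of \cite[Thm.~5]{erez04}, is the fact that $\zv_{eq}$ now contains the \emph{sum} of two independent dithered signals rather than one; this is what forces the doubling in the effective noise variance (the factor $2P$ rather than $P$) and is exactly the origin of the $\tfrac{1}{2}$ inside the logarithm. Once the modified Lemma~6 is in hand this propagates cleanly, but one must check that the mod-$\Lambda_c$ reduction still works when both dithers are added back --- this is what makes Lemma~\ref{lemma:uniformdistribution} essential, since it guarantees that the decoded sum $\tv$ is itself a uniformly distributed codeword so that the standard single-user analysis applies to it.
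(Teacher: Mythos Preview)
Your proposal is correct and follows essentially the same route as the paper: invoke the two-user modifications of Erez--Zamir's Lemma~6 and Lemma~11 (Lemmas~2 and~3 here) to dominate $\Pr\{\Zm_{eq}\notin\Vc(\Lambda_f)\}$ by a Gaussian escape probability with variance tending to $\tfrac{2P\sigma^2}{2P+\sigma^2}$, then compute the Poltyrev exponent and cite the existence of simultaneously Rogers- and Poltyrev-good nested pairs from \cite{erez04}. The paper's own proof is in fact just a terse pointer to exactly these substitutions in the proof of \cite[Thm.~5]{erez04}, so your write-up is a faithful (and more detailed) expansion of it; the mod-$\Lambda_c$ reduction in your first two displays is not needed for this theorem itself and properly belongs to the proof of the next one.
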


\vspace{0.5cm}
\begin{proof}
The proof follows closely the proof of  \cite[Theorem 5]{erez04}. We
mention the places where the proof in \cite{erez04} that have to be
modified. Equation (81) in \cite{erez04} must be modified to take
into account that we have two transmitters. Equation
\cite[(81)]{erez04} must be modified with Lemma 2. Also Equation
\cite[(82)]{erez04} must be modified with Lemma 3. After this, we
can continue with the proof in \cite{erez04} by calculating the
Poltyrev exponents and also using the fact that Rogers-good and
Poltyrev-good lattices exist. Continuing with these steps in
\cite{erez04} shows that we can obtain the rate of $R<\frac12
\log\left(\frac12 + \frac{P}{\sigma^2}\right)$, which proves the
theorem.
\end{proof}

\vspace{0.5cm}

 We are ready to present the main theorem in this
Section which is given below.
\vspace{0.5cm}

\begin{thm}
For the lattice coding scheme described above, any exchange rate $R_{ex,lattice} < \frac{1}{2}\log(\frac{1}{2} +
\frac{P}{\sigma^2})$ is achievable with lattice (Euclidean) decoding.
\end{thm}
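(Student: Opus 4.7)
The plan is to split the analysis into the MAC phase (where the relay uses lattice decoding to recover $\tv$) and the broadcast phase (where the relay transmits $\tv$ via a standard AWGN point-to-point code), invoking Theorem~\ref{NestedLatticeTheorem} for the former and a union bound to combine.

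For the MAC phase, I would have the relay form
\begin{equation*}
\yv_R' \;=\; \left[\alpha \yv_R + \dv_1 + \dv_2\right] \mod \Lambda_c, \qquad \alpha = \frac{2P}{2P+\sigma^2},
\end{equation*}
and output $\hat{\tv} = Q_{\Lambda_f}(\yv_R') \mod \Lambda_c$. Using the definition $\xv_i = (\tv_i - \dv_i) \mod \Lambda_c$, the identity $(\xv_i + \dv_i) \mod \Lambda_c = \tv_i$ (which holds because $\tv_i \in \mathcal{V}(\Lambda_c)$), and the distributivity of $\mod \Lambda_c$, a short computation yields
\begin{equation*}
\yv_R' = \left[\tv + \zv_{eq}\right] \mod \Lambda_c, \qquad \zv_{eq} = -(1-\alpha)(\xv_1 + \xv_2) + \alpha \zv,
\end{equation*}
where $\tv = (\tv_1 + \tv_2) \mod \Lambda_c$. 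Since $\Lambda_c \subseteq \Lambda_f$, mod-$\Lambda_c$ shifts stay in $\Lambda_f$, so the event $\{\zv_{eq} \in \mathcal{V}(\Lambda_f)\}$ forces $\hat{\tv} = \tv$. The standard dither fact (Crypto-Lemma) makes $\xv_1$ and $\xv_2$ i.i.d.\ uniform on $\mathcal{V}(\Lambda_c)$ and independent of $(\tv_1,\tv_2)$, which is exactly the hypothesis of Theorem~\ref{NestedLatticeTheorem}. That theorem then yields, for any $R < \frac12 \log(\frac12 + P/\sigma^2)$, a sequence of nested lattice pairs $(\Lambda_f^{(n)}, \Lambda_c^{(n)})$ of rate $R$ with $\Pr(\zv_{eq} \notin \mathcal{V}(\Lambda_f^{(n)})) \to 0$, so the MAC-phase error $\Pr(\hat{\tv} \neq \tv)$ vanishes.

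For the broadcast phase, Lemma~\ref{lemma:uniformdistribution} makes $\tv$ uniform over the $2^{nR}$ nested-lattice codewords, so conveying its index reduces to a standard point-to-point problem over an AWGN link of capacity $\frac12 \log(1 + P/\sigma^2)$. Because $R < \frac12 \log(\frac12 + P/\sigma^2) < \frac12 \log(1 + P/\sigma^2)$, the relay can use any capacity-achieving Gaussian code to deliver $\tv$ to both $A$ and $B$ with vanishing error. Each node then recovers the other's codeword as $\tv_j = (\tv - \tv_i) \mod \Lambda_c$, using its locally known $\tv_i \in \mathcal{V}(\Lambda_c)$, and from $\tv_j$ the message $\uv_j$ via the inverse of the encoding map. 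A union bound over the two phases gives $P_e \to 0$.

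The main obstacle has already been discharged in Theorem~\ref{NestedLatticeTheorem}: bounding the non-Gaussian density of $\zv_{eq}$ by its Gaussian surrogate $\zv^*$ of variance essentially $2P\sigma^2/(2P+\sigma^2)$, and combining Rogers-good and Poltyrev-good properties to force the effective-noise ball inside $\mathcal{V}(\Lambda_f)$. What remains in the present proof is a routine bookkeeping verification of the mod-$\Lambda_c$ identity above, together with a standard point-to-point achievability argument for the broadcast phase.
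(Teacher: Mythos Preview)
Your proposal is correct and follows essentially the same route as the paper: compute the $\bmod\ \Lambda_c$ identity to isolate $\zv_{eq}$, invoke Theorem~\ref{NestedLatticeTheorem} for the MAC phase, use a capacity-achieving AWGN code for the broadcast phase, and recover via $(\tv - \tv_i)\bmod\Lambda_c$. The only differences are cosmetic---you name the Crypto-Lemma and the union bound explicitly, and you fix $\alpha = 2P/(2P+\sigma^2)$ up front rather than deriving it as the minimizer of the equivalent-noise variance---but the argument is otherwise identical to the paper's.
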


\vspace{0.5cm}

\begin{proof}

\underline{MAC Phase:} We will first show that the probability of error
in decoding $\mathbf{t}$ from $\yv_R$ can be made arbitrarily small for
asymptotically large $n$. During the MAC phase, the relay tries to decode to
\[
\mathbf{t} = (\mathbf{t}_1 + \mathbf{t}_2) \mod \Lambda
\]
from the received signal $\yv_R$ (given in
(\ref{eqn:receivedsignalmacphase}) as follows. The decoder at the
relay node forms $\mathbf{\hat{t}} = (\alpha \yv_R + \mathbf{d}_1 +
\mathbf{d}_2) \mod \Lambda_c$ ($\alpha$ will be determined later)
and finds the lattice point in the fine lattice that is closest to
$\mathbf{\hat{t}}$, i.e., the estimate of $\tv$ is
$Q_{\Lambda_f}(\mathbf{\hat{t}})$. Using the distributive property
of the mod operation, $\mathbf{\hat{t}}$ can be written as:

\begin{eqnarray}
\nonumber
 \mathbf{\hat{t}}&=& (\alpha \yv + \dv_1 + \dv_2) \mod
 \Lambda_c \\
 \nonumber
               &=& (\alpha (\xv_1 + \xv_2) +\alpha \zv  + \dv_1 + \dv_2) \mod
 \Lambda_c \\
 \nonumber
 &=& ( \mathbf{x}_1 +\mathbf{x}_2 + \mathbf{d}_1 + \mathbf{d}_2 + \alpha \zv -(1-\alpha) (\mathbf{x}_1 +\mathbf{x}_2) ) \mod
 \Lambda_c\\
 \nonumber
 &=& ( (\mathbf{t}_1-\mathbf{d}_1) \mod \Lambda_c +(\mathbf{t}_2-\mathbf{d}_2) \mod \Lambda_c + \\
 \nonumber
 &&\mathbf{d}_1 + \mathbf{d}_2+\alpha \zv -(1-\alpha) (\mathbf{x}_1 +\mathbf{x}_2) ) \mod
 \Lambda_c\\
 \nonumber
 &=& ((\mathbf{t}_1 + \mathbf{t}_2) \mod \Lambda_c +\alpha \zv -(1-\alpha)
 (\mathbf{x}_1
 +\mathbf{x}_2)) \mod \Lambda_c\\
 &=& ( \mathbf{t} +\alpha \zv -(1-\alpha) (\mathbf{x}_1
 +\mathbf{x}_2)) \mod \Lambda_c
\end{eqnarray}

Due to the group structure of the lattice, $\mathbf{t}$ is a lattice
point in the fine lattice (more precisely, $\mathbf{t} \in
\{\Lambda_f \cap {\cal V}(\Lambda_c) \}$). From Lemma 1, it can be
seen that $\mathbf{t}$ is uniformly distributed over $\mathcal{
V}(\Lambda_c) $. Further, note that $\tv_1$ and $\tv_2$ are
independent of $\zv$, $\xv_1$ and $\xv_2$ and, hence,  we can define
an equivalent noise term as $\zv_{eq} = \alpha \zv - (1 -
\alpha)(\xv_1 + \xv_2) $ such that $\mathbf{t}$ and $\zv_{eq}$ are
independent of each other. The second moment of $Z_{eq}$ is given by
$\sigma_{eq}^2 = E[Z_{eq}^2] = \alpha^2 \sigma^2 + (1-\alpha)^2 2P$.
We can now choose $\alpha$ to minimize $E[z_{eq}^2]$ and the
resulting optimum values of $\alpha$ and $\sigma_{eq}^2$ are
$\alpha_{opt} = \frac{2P}{2P+\sigma^2}$ and $\sigma_{eq,opt}^2 =
\frac{2P\sigma^2}{2P+\sigma^2}$. From Theorem~\ref{NestedLatticeTheorem},
it can be seen that
there exist nested lattices of rate $R_{lattice} <\frac{1}{2} \log
\left(\frac12+\frac{P}{\sigma^2}\right)$ for which $Pr\{
\mathbf{Z}_{eq} \not\in {\cal{V}}_{f}^{(n)} \} \rightarrow 0, \
\mbox{as}, \ n\rightarrow \infty. $ and, hence, the probability of
decoding error
\[
P_e = Pr\{Q_{\Lambda_f}(\mathbf{\hat{t}}) \neq \mathbf{t} \} \rightarrow 0
\ {\mbox {as}} \ n \rightarrow \infty.
\]
Hence, we can use a rate of
\begin{equation}
R_{Lattice} \stackrel{\triangle}= \frac{1}{n} \log_2 |\Lambda_f \cap
{\cal V}(\Lambda_c)| < \frac{1}{2} \log
\left(\frac{1}{2}+\frac{P}{\sigma^2}\right)
\label{eqn:Rlattice}
\end{equation}
at each of the nodes and $(\tv_1 + \tv_2) \mod \Lambda_c$ can be
decoded at the relay.

\underline{Broadcast Phase:}

In the broadcast phase, the relay node transmits the index of $\mathbf{t}$
  using a capacity achieving code for the AWGN channel. Since the capacity
  of the AWGN is $\frac{1}{2} \log \left(1 + \frac{P}{\sigma^2} \right)$ which
  is higher than $R_{Lattice}$ in (\ref{eqn:Rlattice}) and, hence, the index
  (or, equivalently, $\mathbf{t}$)
  can be obtained at the nodes $A$ and $B$. Since node $A$ already has $\uv_A$
  and, hence, $\tv_1$, it needs to recover $\uv_B$ or, equivalently, $\tv_2$, from
  $\tv$ and $\tv_1$. This can be done as follows. Node $A$ computes
  $(\tv - \tv_1) \mod \Lambda_c$, which can be written as

\begin{eqnarray}
\nonumber
 (\mathbf{t} - \mathbf{t_1}) \mod \Lambda_c &=& ((\mathbf{t}_1 + \mathbf{t}_2) \mod \Lambda_c - \mathbf{t}_1 ) \mod \Lambda_c\\
 \nonumber
 &=& \mathbf{t}_2 \mod \Lambda_c\\
 &=& \mathbf{t}_2
 \end{eqnarray}
Similarly, $\mathbf{t}_1$ can also be obtained in node $B$ by taking
$ (\mathbf{t} - \mathbf{t}_2) \mod \Lambda_c$.
Hence, an effective rate of $R_{ex,Lattice} < \frac{1}{2} \log
\left(\frac{1}{2}+\frac{P}{\sigma^2}\right)$ can be obtained using
nested lattices with lattice decoding.
\end{proof}

\vspace{0.5cm}

We conclude by noting that, at high SNR,  this scheme approaches the upper
bound of $\frac{1}{2} \log \left(1+\frac{P}{\sigma^2}\right)$ and is
therefore nearly optimal.

This scheme can be interpreted as a Slepian-Wolf coding scheme using
nested lattices, i.e., the relay wishes to convey $\tv_1 + \tv_2$ to
node $A$, where some side information (namely, $\tv_1$) is
available. Thus, the broadcast phase in effect uses nested lattices
for solving the Slepian-Wolf coding problem.
This scheme can also be thought of as a decode and forward scheme
where the relay decodes a function of $\tv_1$ and $\tv_2$, namely
$\tv = (\tv_1 + \tv_2) \mod \Lambda_c$ and forwards this to the
nodes. Notice however, that the relay will not know either $\tv_1$
or $\tv_2$ exactly, it only knows $\tv$.

Since the nested lattice code we have used is a capacity achieving code for the AWGN channel,
 one does not have to encode $\tv$
again. The relay can simply transmit  $\tv$ to the nodes $A$ and $B$
and $\tv$ can be decoded at the nodes $A$ and $B$ in the presence of
noise at the nodes $A$ and $B$.
 Notice that $E[||\tv||^2] \leq P$ and, hence,
the power constraint will be satisfied at the relay node.

\section{Lattice Coding with Minimum Angle Decoding} \label{sec:minangledecoding}

 In the previous section we observed that nested lattice decoding
 can achieve an exchange rate of $\frac{1}{2} \log(\frac{1}{2} + \frac{P}{\sigma^2})$ with
 lattice decoding alone.
 This naturally leads us to the question, can we achieve a better
 performance by using other decoding schemes? In this section we study a suboptimal decoder
 called the minimum angle decoder \cite{urbanke98}.
\subsection{Description}

 We next briefly explain our minimum angle decoding scheme.
       We have two transmitters communicating simultaneously to a central router.
       Both of them have the same power constraint $P$. The noise in the channel
       is Gaussian having a variance $\sigma^2$. As we have seen in previous sections, choosing
        a good lattice code provides us with a considerable rate gain compared to random codes.
        Here we consider a $n$-dimensional lattice $\Lambda_n \subset
       \mathds{R}^n$. Let $T_{\sqrt{P}}$ be an $n$-dimensional closed ball,
       centered at the origin and having a radius $\sqrt{nP}$, and let
       $V_n(\sqrt{nP})$ be the hyper-volume of $T_{\sqrt{P}}$.
       This can be treated as a power constraint. Our
       codewords will be composed of lattice points in the sphere
       $T_{\sqrt{P}}$. Our encoding strategy would be that, each transmitter
       chooses a lattice point corresponding to its message index
       and transmits synchronously over the the Gaussian channel.
       Here we have no nested lattice construction or the use of
       dither in the encoding stage.
       At the receiver we will be interested in decoding to the sum
       of these lattice points.

\textit{Minimum Angle Decoder:} A minimum angle decoder discussed
here makes a decision based on lattice points in a thin
n-dimensional spherical shell $T_{\sqrt{2P}}^\Delta \doteq \{\xv \in
\mathds{R}^n : \sqrt{n(2P-\delta)} \leq \|\xv\| \leq
\sqrt{n(2P+\delta)} \}$, $\delta$ is small and non-zero.  It takes
the received vector and finds the lattice point, whose projection on
the thin shell, is closest to the received vector.

An optimal decoder can be seen to be
  a maximum likelihood decoder. However, it is
  very tough to analyze this decoder. Hence, we analyze the minimum angle decoder,
   which is more tractable analytically.
We next state the main theorem of this section,

\subsection{Achievable rate}

\begin{thm}
For the bi-directional relaying problem considered in Section~\ref{sec:introduction},
there exists at least one $n$-dimensional
lattice $\Lambda_n$ such that an exchange rate of $R_{ex} < \frac{1}{2}
\log(\frac{1}{2} + SNR)$ is achievable using a minimum angle decoder as
$n \rightarrow \infty$.
\end{thm}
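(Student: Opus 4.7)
The plan is to mirror the Urbanke--Rimoldi \cite{urbanke98} minimum angle decoder argument, but applied to the \emph{sumset} of the lattice code rather than to the code itself, since the relay only needs to identify $\tv = \xv_1 + \xv_2$. Because both transmitters use the same lattice $\Lambda_n$, the sum $\tv$ is itself a lattice point, and its candidate values lie in the enlarged ball $T_{\sqrt{2P}}$. Concentration of measure implies that $\|\xv_1+\xv_2\|^2$ concentrates around $2nP$ and $\|\yv_R\|^2$ around $n(2P+\sigma^2)$, so with high probability $\xv_1+\xv_2 \in T_{\sqrt{2P}}^{\Delta}$ for small $\delta$ and the angle between $\yv_R$ and the true sum concentrates around $\theta^{\ast}$ with $\sin^2\theta^{\ast} = \sigma^2/(2P+\sigma^2)$. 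These two concentration facts will be the first step.

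Next I would employ a random ensemble of lattices of density $\mu$ (e.g. Loeliger's construction, so Minkowski--Hlawka-type averaging applies) and examine the expected number of \emph{incorrect} candidate sums that lie in the thin shell $T_{\sqrt{2P}}^{\Delta}$ and fall in the spherical cap of half-angle $\theta^{\ast} + \epsilon$ around $\yv_R$. The volume of $T_{\sqrt{2P}}^{\Delta}$ grows like $V_n(\sqrt{2nP}) = 2^{n/2} V_n(\sqrt{nP})$ relative to the codebook ball, and the fractional solid-angle measure of the spherical cap scales as $(\sin(\theta^{\ast}+\epsilon))^{n-1}$ up to polynomial factors. Combining these with the ensemble density gives an expected count of wrong candidates bounded above by
\begin{equation}
2^{nR}\cdot 2^{n/2}\cdot (\sin\theta^{\ast})^{n-1}\cdot \mathrm{poly}(n),
\end{equation}
so that requiring this to vanish yields
\begin{equation}
R \;<\; -\tfrac{1}{2} - \log_2\sin\theta^{\ast} \;=\; \tfrac{1}{2}\log_2\!\left(\tfrac{1}{2} + \tfrac{P}{\sigma^2}\right),
\end{equation}
exactly the claimed rate. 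A standard expurgation then extracts a single lattice from the ensemble that achieves vanishing error.

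The step I expect to be the main technical obstacle is controlling the number of \emph{distinct} sums $\xv_1+\xv_2$ on the shell $T_{\sqrt{2P}}^{\Delta}$: unlike a code, a sumset of a lattice can have multiplicity, and the true count must be dominated by the lattice's point count in that shell uniformly over the random ensemble. This requires showing that the average point count in any thin shell behaves like its Euclidean volume times the lattice density (a Minkowski--Hlawka-style second-moment argument), together with a concentration bound ensuring no atypical lattice contributes a dominant number of ``close'' spurious candidates. The remaining subtlety is that the minimum angle decoder compares projections onto the shell rather than Euclidean distances; one must verify that the projection error is $o(\sqrt{n})$ so the angular ordering is preserved on the event that $\yv_R$, $\xv_1+\xv_2$, and all non-negligible competitors are typical, which follows from the same sphere-hardening estimates as in \cite{urbanke98}.
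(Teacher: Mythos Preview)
Your proposal follows essentially the same route as the paper --- an Urbanke--Rimoldi minimum angle argument transplanted to the sum $\xv_1+\xv_2$, with concentration onto the shell $T_{\sqrt{2P}}^{\Delta}$, the angle threshold $\sin^2\theta^\ast=\sigma^2/(2P+\sigma^2)$, and a Minkowski--Hlawka average to extract a good lattice --- and your rate calculation is correct.

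Two technical points where the paper's execution differs from your sketch are worth flagging. First, the paper does \emph{not} use the raw lattice for both users; it gives each transmitter its own translate, $\Cc_i=(\Lambda_n+\sv_i)\cap T_{\sqrt P}$, and invokes Blichfeldt's principle (integrating over $(\sv_1,\sv_2)\in P_n\times P_n$) to guarantee the existence of translations for which simultaneously $M_i\gtrsim V_n(\sqrt{nP})/d_n$ and the fraction of pairs whose sum misses the shell is $\le 4V_\oplus'/V_\oplus$. Your ensemble-average argument may absorb this implicitly, but the paper separates ``good translations for any lattice'' (Blichfeldt) from ``some good lattice exists'' (Minkowski--Hlawka), and you should be aware that both averaging steps are needed. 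Second, the multiplicity issue you flag as the main obstacle is sidestepped entirely: the paper's decoder does not search over the sumset $\{\xv_1+\xv_2:\xv_i\in\Cc_i\}$ but over \emph{all} translated lattice points $(\Lambda_n+\sv_1+\sv_2)\cap T_{\sqrt{2P}}^{\Delta}$. This is a superset of the achievable sums, so the error probability is only worsened, but now the competitor set is a single coset of $\Lambda_n$ intersected with a shell, and the Minkowski--Hlawka bound applies cleanly to $\sum_{g\in\Lambda_n\setminus\{0\}}p_\theta(\cdot,g+\cdot)$ with no second-moment or multiplicity control required.
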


\subsection*{Proof Sketch}

\begin{figure}[htbp]
\begin{center}
\includegraphics[width=4in,angle = 0]{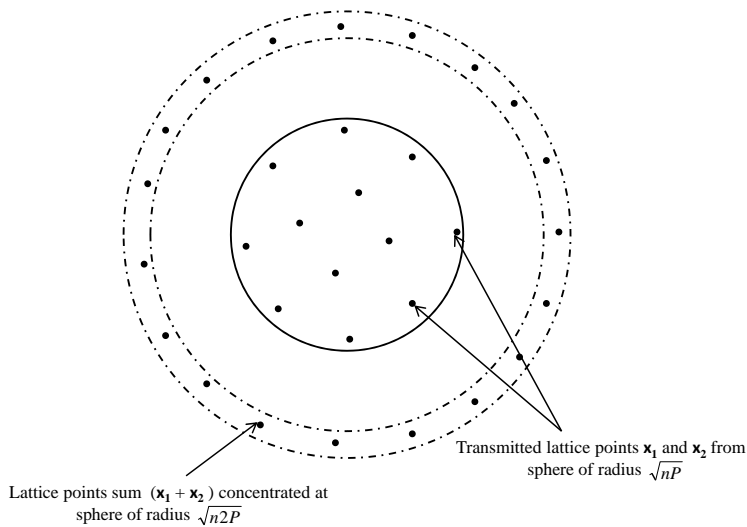}
\end{center}
\caption{Picture showing the concentration of the sum of lattice
points on the thin shell of radius $\sqrt{n2P}$. }
\label{fig:sphereconcentration}\end{figure}

          It is well known that the volume of an $n$-dimensional sphere
          is concentrated mainly on the surface of the sphere as
          the dimension becomes large. It is also known that, if we
          intersect a lattice with a $n$-dimensional sphere, then most of
          the lattice points will be concentrated very close to the
          surface \cite{urbanke98}.  In the course of our proof, we will show
          that the sum of any two such randomly chosen lattice
          points is also concentrated on a thin spherical shell $T_{\sqrt{2P}}^\Delta$
          at a radius of $\sqrt{2nP}$.
          Hence, the probability of error will be
          largely depended on the lattice points in the thin spherical shell $T_{\sqrt{2P}}^{\Delta}$.
          We will be using the Blichfeldt's principle to
          show that there exist translations (one for each user) of the lattice $\Lambda_n$
          where the sum points are concentrated (see Theorem~\ref{thm:blichfeldt}, Lemma~\ref{lem:Hyper Volume Concentration Lemma}
           and Lemma~\ref{lem:blichtranslationrate}).
          Once concentration for the sum of lattice points can be established,
           we can perform minimum angle
          decoding.  In minimum angle decoding, we are interested only
          in the angle between the different lattice points on the thin
          spherical shell. It must be noted that the choice of the lattice
          $\Lambda_n$ must be such that it must act as a good channel code. The
          Minkowski-Hlawka theorem (Theorem~\ref{thm:Minkowski-hlawka} and Lemma~\ref{lem:existencelatticeMinkowski})
         can be used to show existence of
          such lattices. Choosing volume of the lattice's Voronoi region appropriately
          allows us to compute an achievable rate of this scheme.


\subsection{Detailed proof}

     First let us provide some definitions and notation,
    \begin{itemize}
    \item $\Lambda_n$ denotes an $n$-dimensional lattice and let $P_n$
     be the basic Voronoi region of the lattice.
    \item  $\mathbf{s_1},\mathbf{s_2}$ are two $n$-dimensional translational
    vectors,
       $\mathbf{s_1},\mathbf{s_2} \in P_n$.
      \item $\mathcal{L}(\mathbf{s},T) $ is defined as the set of  lattice points translated by
      a vector $\sv$ present in the region $T$ or otherwise, $\mathcal{L}(\mathbf{s},T)=(\Lambda_n + \sv) \cap T
      $.
      \item $T_{\sqrt{P}}$ and $T_{\sqrt{2P}}$ represents
      $n$-dimensional hyper-spheres centered at the origin and having
      a radius of $\sqrt{nP}$ and $\sqrt{n2P}$ respectively.
      \item $T_{\sqrt{2P}}^\Delta$ represents $n$-dimensional thin spherical shell around the radius
      $\sqrt{n2P}$ and is defined as $T_{\sqrt{2P}}^\Delta = \{\xv \in
\mathds{R}^n : \sqrt{n(2P-\delta)} \leq \|\xv\| \leq
\sqrt{n(2P+\delta)} \}$, $\delta$ is small and non-zero.
\item $\mathcal{C}_1$ and $\mathcal{C}_2$ are codewords formed by
intersection of lattice points of hyper-spheres and are given by,
$\mathcal{C}_1 = \mathcal{L}(\mathbf{s_1},T_{\sqrt{P}})=(\Lambda_n +
\sv_1) \cap T_{\sqrt{P}} $ and $\mathcal{C}_2 =
\mathcal{L}(\mathbf{s_2},T_{\sqrt{P}})=(\Lambda_n + \sv_2) \cap
T_{\sqrt{P}} $.
\item
Similarly $\mathcal{C}_{\sqrt{2P}}$ is given as
$\mathcal{C}_{\sqrt{2P}} = \mathcal{L}(\mathbf{s_1} +
\mathbf{s_2},T_{\sqrt{2P}})=(\Lambda_n + \sv_1 + \sv_2) \cap
T_{\sqrt{2P}} $ and $\mathcal{C}_{\sqrt{2P}}^{\Delta}$ is given by
$\mathcal{C}_{\sqrt{2P}}^{\Delta} = \mathcal{L}(\mathbf{s_1} +
\mathbf{s_2},T_{\sqrt{2P}}^{\Delta})=(\Lambda_n + \sv_1 + \sv_2)
\cap T_{\sqrt{2P}}^{\Delta} $.
 \item $\mathcal{C}_2$ are codewords
formed by intersection of lattice points of hyper-spheres and are
given by, $\mathcal{C}_1 =
\mathcal{L}(\mathbf{s_1},T_{\sqrt{P}})=(\Lambda_n + \sv_1) \cap
T_{\sqrt{P}} $ and $\mathcal{C}_2 =
\mathcal{L}(\mathbf{s_2},T_{\sqrt{P}})=(\Lambda_n + \sv_2) \cap
T_{\sqrt{P}} $.
\item Let $\mathcal{C}_{\oplus}$ be defined as
     $\mathcal{C}_{\oplus} = \mathcal{C}_1 \times \mathcal{C}_2 = \{(\mathbf{x_1},\mathbf{x_2}) : \mathbf{x_1} \in \mathcal{C}_1, \mathbf{x_2} \in \mathcal{C}_2
      \}.$
      This denotes the combined collection of pairs of codewords of both the
      transmitters.
      \item
      Again let $\mathcal{C}_{\oplus}^{\Delta} =
      \{(\mathbf{x_1},\mathbf{x_2}) :
       \mathbf{x_1} + \mathbf{x_2} \in T_{\sqrt{2P}}^{\Delta} ,
       \mathbf{x_1} \in \mathcal{C}_1, \mathbf{x_2} \in \mathcal{C}_2
       \}$ This denotes the codeword pairs whose sum lies on the
       thin shell $T_{\sqrt{2P}}^{\Delta}$.
       \item Let $\mathcal{C}_{\oplus}' = \mathcal{C}_{\oplus} \setminus \mathcal{C}_{\oplus}^{\Delta}
       $. This denotes the code word pairs whose sum does not lie
       on the thin shell. It must be noted that the set formed by the sum of
       codewords in $\mathcal{C}_{\oplus}^{\Delta}$ need not be the same as
       $\mathcal{C}_{\sqrt{2P}}^{\Delta}$ and at low SNR this may
       lead to significant difference between ML and minimum angle
       decoding.
     \item Let $M_1$, $M_2$, $M_{\oplus}$, $M_{\oplus}^{\Delta}$ and $M_{\oplus}'$ denote the cardinality of $\mathcal{C}_1$,
      $\mathcal{C}_2$, $\mathcal{C}_{\oplus}$, $\mathcal{C}_{\oplus}^{\Delta}$ and $\mathcal{C}_{\oplus}'$  respectively.

\item
For a given code $\mathcal{C}$, let us denote the average
probability of error, under minimum distance decoding as
$P^{\mathcal{C}}$.

\item
 Let us define a projection function $\pi$. This projects a $n$
 dimensional vector onto to an inner sphere of radius
 $\sqrt{n(2P-\delta)}$. It is defined as $\pi(\xv) = (\sqrt{n(2P-\delta)}/\|\xv \|)\xv$.

\end{itemize}

 It is easy to see that minimum distance decoding
is equivalent to maximum likelihood decoding in the presence of
Gaussian noise.

As mentioned before the set of lattice points whose sum lies in the
thin spherical shell $T_{\sqrt{2P}}^{\Delta}$ is much larger than
the lattice points whose sum lies outside the spherical region.
Hence the average probability of error will not be affected much by
these lattice points. This motivates us to express the average
probability of error $P^{\mathcal{C}_{\oplus}}$ as a sum of two terms
as made more clear in the lemma below\cite{urbanke98}.

\begin{lem}

\begin{equation*}P^{\mathcal{C}_{\oplus}} \leq \frac{M_{\oplus}'}{M_{\oplus}} +
\frac{1}{M_\oplus} \sum_{(\mathbf{x_1},\mathbf{x_2}) \in
\mathcal{C}_{\oplus}^{\Delta}}{P}^{\mathcal{C}_{\oplus}^{\Delta}}(\pi(\mathbf{x_1}
+ \mathbf{x_2}))
\end{equation*}

 \end{lem}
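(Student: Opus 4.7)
The plan is to decompose the average error probability $P^{\mathcal{C}_{\oplus}}$ according to whether the transmitted codeword pair lies in $\mathcal{C}_{\oplus}^{\Delta}$ (its sum on the thin shell) or in $\mathcal{C}_{\oplus}' = \mathcal{C}_{\oplus}\setminus\mathcal{C}_{\oplus}^{\Delta}$, and to bound the two resulting contributions separately. Concretely, I would start by writing
$$ P^{\mathcal{C}_{\oplus}} \;=\; \frac{1}{M_{\oplus}}\!\!\sum_{(\xv_1,\xv_2)\in \mathcal{C}_{\oplus}'}\!\!\! P_e(\xv_1,\xv_2) \;+\; \frac{1}{M_{\oplus}}\!\!\sum_{(\xv_1,\xv_2)\in\mathcal{C}_{\oplus}^{\Delta}}\!\!\! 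P_e(\xv_1,\xv_2), $$
where $P_e(\xv_1,\xv_2)$ is the conditional error probability of the minimum angle decoder given that the pair $(\xv_1,\xv_2)$ was transmitted. The codewords are assumed uniformly distributed over $\mathcal{C}_{\oplus}$, so this is exactly the average.

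For the first sum, I would exploit the fact that by construction the minimum angle decoder only outputs codeword pairs whose sum lies on $T_{\sqrt{2P}}^{\Delta}$. Consequently, whenever the transmitted pair is in $\mathcal{C}_{\oplus}'$ the decoder cannot possibly produce it, so $P_e(\xv_1,\xv_2)\leq 1$ and the contribution of this sum is at most $M_{\oplus}'/M_{\oplus}$, which is the first term in the lemma.

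For the second sum, I would note that when $(\xv_1,\xv_2)\in\mathcal{C}_{\oplus}^{\Delta}$, the sum $\xv_1+\xv_2$ already lies on the thin shell, so $\pi(\xv_1+\xv_2)$ is (up to the negligible shell thickness) equal to $\xv_1+\xv_2$ itself. Since the minimum angle decoder compares projections of candidate sums onto the shell and all such projections have norm very nearly $\sqrt{2nP}$, selecting the nearest angle is equivalent (up to a vanishing correction as $n\to\infty$) to selecting the nearest projection in Euclidean distance among the codewords of $\mathcal{C}_{\oplus}^{\Delta}$. Thus the conditional error probability for a transmitted pair in $\mathcal{C}_{\oplus}^{\Delta}$ is exactly $P^{\mathcal{C}_{\oplus}^{\Delta}}(\pi(\xv_1+\xv_2))$, yielding the second term. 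Combining the two bounds gives the stated inequality.

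The main technical step is the equivalence used in the last paragraph: that on the thin shell, where all points have norm concentrated around $\sqrt{2nP}$, minimum angle decoding coincides (in probability of error terms) with minimum distance decoding on the projected code $\mathcal{C}_{\oplus}^{\Delta}$. This is straightforward from the shell-concentration geometry, but it is the only place where $\delta>0$ actually enters the argument, and it must be handled carefully so that the approximation error vanishes as $n\to\infty$ for the subsequent rate calculation to go through.
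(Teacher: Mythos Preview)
Your decomposition into shell and non-shell codeword pairs matches the paper, as does bounding the non-shell contribution trivially by $M_{\oplus}'/M_{\oplus}$. There are, however, two genuine gaps.

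First, by the paper's definition $P^{\mathcal{C}_{\oplus}}$ is the average error under \emph{minimum distance} (equivalently ML) decoding over the full code $\mathcal{C}_{\oplus}$, not under the minimum angle decoder. The paper's opening move is therefore to upper-bound $P^{\mathcal{C}_{\oplus}}$ by the error $\tilde{P}^{\mathcal{C}_{\oplus}}$ of a suboptimal decoder that performs minimum distance decoding restricted to $\mathcal{C}_{\oplus}^{\Delta}$; this is immediate because ML is optimal, but it must be stated. Your identity $P^{\mathcal{C}_{\oplus}} = \tfrac{1}{M_{\oplus}}\sum P_e(\xv_1,\xv_2)$ with $P_e$ the minimum angle conditional error is not the quantity on the left-hand side of the lemma.

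Second, and more substantively, the passage from the restricted-to-shell decoder error $\tilde{P}^{\mathcal{C}_{\oplus}}(\xv_1+\xv_2)$ to $P^{\pi(\mathcal{C}_{\oplus}^{\Delta})}(\xv_1+\xv_2)$ is a finite-$n$ \emph{inequality}, not an asymptotic equivalence. The lemma must hold for every $n$ and every fixed $\delta>0$, so an ``approximation error vanishing as $n\to\infty$'' argument is not what is required, and your claim that the conditional error ``is exactly $P^{\mathcal{C}_{\oplus}^{\Delta}}(\pi(\xv_1+\xv_2))$'' is the step that would fail. The paper obtains the needed inequality by invoking \cite[Lemma~2]{urbanke98}: projecting all shell codewords radially onto the inner sphere of radius $\sqrt{n(2P-\delta)}$ can only \emph{increase} the conditional error probability, because the transmitted point is moved to the smallest norm in the shell and hence the additive noise subtends the largest angle there. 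That monotonicity under projection to the inner sphere is the missing geometric idea in your proposal.
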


\begin{proof}

Let the ordered pair $(\mathbf{x_1},\mathbf{x_2}) \in
\mathcal{C}_{\oplus}$, denote that $\mathbf{x_1} \in \mathcal{C}_1$
and $\mathbf{x_2} \in \mathcal{C}_2$. We next follow similar steps
of the proof as given in \cite{urbanke98}. Let
${P}^{\mathcal{C}_{\oplus}}(\mathbf{x_1},\mathbf{x_2}) =
{P}^{\mathcal{C}_{\oplus}}(\mathbf{x_1}+ \mathbf{x_2}) $, denote the
probability of error in decoding to the sum $(\mathbf{x_1}+
\mathbf{x_2})$, for a pair  $ (\mathbf{x_1},\mathbf{x_2}) \in
\mathcal{C}_{\oplus}$. $\mathbf{x_1}$ and $ \mathbf{x_2}$ are
transmitted simultaneously by user $1$ and $2$ respectively and the
router receives the sum $\mathbf{x_1} + \mathbf{x_2}$ corrupted with
some Gaussian noise. Now let $\tilde{P}^{\mathcal{C}_{\oplus}}$,
represent a suboptimum decoder which maps the received point to the
nearest sum $\mathbf{x_1} + \mathbf{x_2}$, where
$(\mathbf{x_1},\mathbf{x_2}) \in \mathcal{C}_{\oplus}^{\Delta}$.
Hence we can bound the average probability of error as follows

\begin{eqnarray*}
 {P}^{\mathcal{C}_{\oplus}} &\leq&
 \tilde{P}^{\mathcal{C}_{\oplus}}\\
 &=& \frac{1}{M_\oplus} \sum_{(\mathbf{x_1},\mathbf{x_2})
\in
\mathcal{C}_{\oplus}'}\tilde{P}^{\mathcal{C}_{\oplus}}(\mathbf{x_1},\mathbf{x_2})
+ \frac{1}{M_\oplus} \sum_{(\mathbf{x_1},\mathbf{x_2}) \in
\mathcal{C}_{\oplus}^{\Delta}}\tilde{P}^{\mathcal{C}_{\oplus}}(\mathbf{x_1},\mathbf{x_2})\\
&\stackrel{(a)}=& \frac{M_{\oplus}'}{M_{\oplus}} +
\frac{1}{M_\oplus} \sum_{(\mathbf{x_1},\mathbf{x_2}) \in
\mathcal{C}_{\oplus}^{\Delta}}{\tilde{P}}^{\mathcal{C}_{\oplus}}(\mathbf{x_1},\mathbf{x_2})\\
&=& \frac{M_{\oplus}'}{M_{\oplus}} + \frac{1}{M_\oplus}
\sum_{(\mathbf{x_1},\mathbf{x_2}) \in
\mathcal{C}_{\oplus}^{\Delta}}{\tilde{P}}^{\mathcal{C}_{\oplus}}(\mathbf{x_1}+\mathbf{x_2})\\
&\stackrel{(b)}\leq& \frac{M_{\oplus}'}{M_{\oplus}} +
\frac{1}{M_\oplus} \sum_{(\mathbf{x_1},\mathbf{x_2}) \in
\mathcal{C}_{\oplus}^{\Delta}}{P}^{\pi(\mathcal{C}_{\oplus}^{\Delta})}(\mathbf{x_1}+\mathbf{x_2})\\
\end{eqnarray*}

Here in $(a)$ the first term follows since $M_{\oplus}'$ is the
cardinality of $\mathcal{C}_\oplus'$ and the probability of error
using our suboptimal decoder is 1, when $(\mathbf{x_1},\mathbf{x_2})
\in \mathcal{C}_{\oplus}'$. In (b) $\pi$ refers to the projection
function defined before and
${P}^{\pi(\mathcal{C}_{\oplus}^{\Delta})} $, represents the minimum
angle decoder described before. The inequality can be shown to be
true by referring to the discussions on  \cite[Lemma 2]{urbanke98}.

 \end{proof}

Next since we are interested in lattice points projected on to the
inner sphere of radius $\sqrt{n(2P-\delta)}$, we can define a
decoding algorithm that looks at the angle between the lattice
points, the minimum angle decoder. We next establish some more
definitions. Let $B_\theta(\yv)$ denote a n-dimensional cone
centered at the origin and having the axis passing through $\yv$.
Let $\theta$ be the half-angle of the cone and $\yv \in
\mathds{R}^n$ be non-zero.

 We next define a sub-optimum decoding function  given as follows,
 \begin{equation*} A_{\theta}(\mathbf{x_1},\mathbf{x_2}) = B_{\theta}(\mathbf{x_1}+\mathbf{x_2}) \setminus
 \bigcup_{\mathbf{x'} \in \mathcal{C}_{\sqrt{2P}}^{\Delta}\setminus \{\mathbf{x_1}+\mathbf{x_2}\}} B_{\theta}(\mathbf{x'})
 \end{equation*} or this can also be expressed as
\begin{equation*} A_{\theta}^{C}(\mathbf{x_1},\mathbf{x_2}) = B_{\theta}^{C}(\mathbf{x_1}+\mathbf{x_2})
 \bigcup_{\mathbf{x'} \in \mathcal{C}_{\sqrt{2P}}^{\Delta}\setminus \{\mathbf{x_1}+\mathbf{x_2}\}} B_{\theta}(\mathbf{x'})
 \end{equation*}
$A_\theta(\mathbf{x_1},\mathbf{x_2})$ represents the region of the
cone $B_{\theta}(\mathbf{x_1}+\mathbf{x_2})$, that does not
intersect with any other cone corresponding to the other lattice
codeword points $\mathbf{x'}$, located in the thin spherical shell.
During decoding, when we receive a vector that falls in the region
$A_\theta(\mathbf{x_1},\mathbf{x_2})$, we decode to the sum codeword
$(\mathbf{x_1} + \mathbf{x_2})$. It may not be possible to decode to
the individual codewords $\mathbf{x_1}$ and $\mathbf{x_2}$, as
different pairs of codewords may yield the same sum. However it must
be noted, that in the forward phase, we are interested in decoding
only to the sum of the transmitted codewords.

        Let $P_{\theta}$ denote the probability of error using the
        sub-optimum decoder. Then, we have
\begin{eqnarray*}
P_{\theta}^{\pi(\mathcal{C}_{\oplus}^{\Delta})}
(\mathbf{x_1},\mathbf{x_2}) &=& \mbox{Pr}\left( \pi(\mathbf{x_1} +
\mathbf{x_2}) + Z^n \in
A_{\theta}^{C}(\mathbf{x_1},\mathbf{x_2}) \right)\\
&\stackrel{(a)}\leq& \mbox{Pr}\left( \pi(\mathbf{x_1} +
\mathbf{x_2}) + Z^n \not \in B_{\theta}(\mathbf{x_1} + \mathbf{x_2})
\right) +  \sum_{\mathbf{x'}  \in
\mathcal{C}_{\sqrt{2P}}^{\Delta}\setminus\{ \mathbf{x_1} +
\mathbf{x_2} \}} \mbox{Pr}\left( \pi(\mathbf{x_1} + \mathbf{x_2}) +
Z^n
\in B_{\theta}(\mathbf{x'}) \right)\\
&\stackrel{(b)} = &  \mbox{Pr}\left( \mathbf{t_0} + Z^n \not \in
B_{\theta}(\mathbf{t_0}) \right)  + \sum_{\mathbf{x'} \in
\mathcal{C}_{\sqrt{2P}}^{\Delta}\setminus\{ \mathbf{x_1} +
\mathbf{x_2} \}} p_\theta\left( \mathbf{x_1} + \mathbf{x_2}
,\mathbf{x'}
 \right)\\
&\stackrel{(c)} = &  \mbox{Pr}\left( \mathbf{t_0} + Z^n \not \in
B_{\theta}(\mathbf{t_0}) \right)  + \sum_{g  \in \Lambda_n \setminus
\{0 \}} p_\theta\left(\mathbf{x_1} + \mathbf{x_2}, g +\mathbf{x_1} +
\mathbf{x_2}
 \right)\chi_{T_{\sqrt{2P}}^{\Delta}}(g + \mathbf{x_1} + \mathbf{x_2})\\
\end{eqnarray*}

In the above (a) follows because we use the union bound. In (b), the
first term follows, because due to symmetry, the probability is not
dependent on the particular $\mathbf{x_1} + \mathbf{x_2}$.
 The second term follows as we define $p_\theta(\xv,\mathbf{x'}) $ as
$p_\theta(\xv,\mathbf{x'}) = \mbox{Pr}( \pi(\xv) + Z^n \not \in
B_{\theta}(\mathbf{x'}))$. In (c), we replace $\mathbf{x'}$, by $g +
\mathbf{x_1} + \mathbf{x_2}$ and the characteristic function
$\chi_{T_{\sqrt{2P}}^{\Delta}}$, corresponds to lattice points on
the thin shell at radius $\sqrt{n2P}$.

Hence the average probability of error can be bounded as
\begin{equation}\label{eqn:proberrorblichminkow}
\begin{split}
{P}^{\mathcal{C}_{\oplus}} \leq & \mbox{Pr}\left( \mathbf{t_0} + Z^n
\not \in B_{\theta}(\mathbf{t_0}) \right)  +
\frac{M_{\oplus}'}{M_{\oplus}} + \\&
 \frac{1}{M_\oplus}
\sum_{(\mathbf{x_1},\mathbf{x_2}) \in \mathcal{C}_{\oplus}^{\Delta}}
 \sum_{g  \in \Lambda_n \setminus
\{0 \}} p_\theta\left( \mathbf{x_1} + \mathbf{x_2},g + \mathbf{x_1}
+ \mathbf{x_2} \right)\chi_{T_{\sqrt{2P}}^{\Delta}}(g + \mathbf{x_1}
+ \mathbf{x_2})
\end{split}
\end{equation}

The rest of the proof deals with bounding each of the three terms in
the above equation by an arbitrarily small quantity, to make the
probability of error tend to zero as $n \rightarrow \infty$. Below
we briefly explain the requirements.

\begin{itemize}

\item The first term can be made very small by choosing the angle
$\theta$ appropriately. In effect, we need the noise $Z^n$ to be
contained inside the cone $B_{\theta}(\mathbf{t_0})$ with high
probability as the dimension $n$ becomes large.

 \item For the
second term we need the number of codeword pairs whose sum of
codewords lies outside the thin spherical shell must be shown to be
much lesser than the total number of codeword pairs. In other words
we need to show that the sum of lattice points are concentrated in
the thin spherical shell around the radius $\sqrt{n2P}$.  This is
shown in Lemma~\ref{lem:Hyper Volume Concentration Lemma}.

\item The third term has a summation which is difficult to
evaluate and hence we bound it by an integral and evaluate
the resulting integral.

\item Finally, we require that the number of codewords in each of
the inner sphere must be sufficiently large to achieve rates close
to $\frac{1}{2} \log(\frac12 + SNR)$.

\end{itemize}

  The Blichfeldt's principle(see Theorem~\ref{thm:blichfeldt}) can
  be applied to show concentration of codeword pairs. Lemma~\ref{lem:blichtranslationrate}
  in Appendix~C, is an application of
  the Blichfeldt's principle that guarantees that for any given
  lattice, we can find translations that satisfy $\frac{M_{\oplus}'}{M_{\oplus}} \leq 4
  \frac{V_{\oplus}'}{V_{\oplus}}$. Also it makes sure that we can
  find enough codewords in the hyper spheres of radius $\sqrt{nP}$,
  such that we can achieve a rate of $\frac{1}{2} \log(\frac{1}{2} + SNR)$.

  The Minkowski-Hlawka theorem (see Theorem~\ref{thm:Minkowski-hlawka}
  in Appendix~A) is used to establish
  the existence of at least one lattice such that the
  summation of the third term can be bounded by an integral. This
  theorem along with Lemma~\ref{lem:blichtranslationrate} in Appendix~C,
  are used together in Lemma~8 to obtain bounds on both the
  second and third term. Hence we can effectively rewrite
  (\ref{eqn:proberrorblichminkow}) by using these bounds to get,

\begin{equation}
\begin{split}
{P}^{\mathcal{C}_{\oplus}} & \leq 4
\frac{V_{\oplus}'}{V_{\oplus}^{\Delta}} + \mbox{Pr}\left(
\mathbf{t_0} + Z^n \not \in B_{\theta}(\mathbf{t_0}) \right)  \\ & +
\left[2
 \frac{(n-1)\pi^{\frac{n - 1}{2}}{(n(2P+\delta))^{n/2}}}{d_n n \Gamma (\frac{n +
 1}{2})}   \int_{0}^{\theta} \sin^{n-2}(x) dx\right]
\end{split}
\end{equation}

We can bound the integral, as shown in
\cite[p.~623--624]{shannon59} to get,

\begin{equation}
\begin{split}
{P}^{\mathcal{C}_{\oplus}} & \leq  \mbox{Pr}\left( \mathbf{t_0} +
Z^n \not \in B_{\theta}(\mathbf{t_0}) \right) +  4
\frac{V_{\oplus}'}{V_{\oplus}^{\Delta}}     + \left[2
 \frac{\pi^{\frac{n - 1}{2}}{(n(2P+\delta))^{n/2}}}{d_n n \Gamma (\frac{n +
 1}{2})}  \frac{ \sin^{n-1}(\theta)}{\cos \theta}\right]
\end{split}
\end{equation}

Now we next need to choose the appropriate values for $\theta$ and
$d_n$ to make the probabilities go to 0. For the second term,
consider $\sin \angle (t_0 + Z,t_0)$ which is given by
\begin{equation*}
\sin \angle (t_0 + Z,t_0) =
\sqrt{\frac{\|Z^\bot\|^2}{(\sqrt{n(2P-\delta)} + Z^{\|})^2 +
\|Z^\bot \|^2}} \stackrel{n\rightarrow \infty}\longrightarrow
\sqrt{\frac{\sigma^2}{2P - \delta + \sigma^2}}
\end{equation*}
Hence we choose $\sin \theta = \sqrt{\frac{\sigma^2}{2P - \delta +
\sigma^2}}.$ For the third term a good choice of $d_n$ is
\begin{equation*}
d_n = V_n(\sqrt{n(2P+2\delta)}) \sin^n \theta =  \frac{ \pi^{n/2}
{(n(2P + 2 \delta))^{n/2}} (\sin^n\theta)}{\Gamma(n/2 + 1)}
\end{equation*}
This choice helps us to make the third term tend to 0 for large $n$.
The third term then can be rewritten as given below. We use the
results in  \cite[p. 277]{urbanke98}, to bound the Gamma functions
to get,
\begin{equation*}
\begin{split}
\left[2
 \frac{\pi^{\frac{n - 1}{2}}{(n(2P+\delta))^{n/2}}}{d_n n \Gamma (\frac{n +
 1}{2})}  \frac{ \sin^{n-1}(\theta)}{\cos \theta}\right]
  = \frac{2}{\sqrt{\pi} \sin \theta \cos \theta}
  \left(\frac{2P+\delta}{2P + 2\delta}\right)^{n/2}
  \frac{\Gamma(\frac{n}{2} + 1)}{n \Gamma(\frac{n+1}{2})}\\
  <\frac{2}{\sqrt{\pi} \sin \theta \cos \theta}
  \left(\frac{2P+\delta}{2P + 2\delta}\right)^{n/2} \frac{1}{2}
  \frac{\left[1 + \Gamma(\frac{n+1}{2})\right]}{ \Gamma(\frac{n+1}{2})}
 \end{split}
\end{equation*}
This decays to 0 exponentially as $n \rightarrow \infty$.

Now, the achievable rate
can be obtained from the number of lattice points in the sphere of
radius $\sqrt{nP}$. This value $M(\Lambda_n,\mathbf{s_1^*} )$,
$M(\Lambda_n,\mathbf{s_2^*} )$, from the lemma,  can be seen to be
greater than $\frac{V}{8 d_n}$. Hence the rate $R$ is given by,

\begin{equation*}
\begin{split}
 R & \geq \frac{1}{n} \log M(\Delta_n,\mathbf{s_1^*}) \\
 & \geq \frac{1}{n} \log \frac{V}{8 d_n} \\
& \geq \frac{1}{n} \log \left[\frac{1}{8} \left(\frac{P}{(2P + 2
\delta) \sin^2 \theta}\right)^{n/2}\right]\\
& \geq \frac{1}{2} \log \left( \frac{P}{2P + 2 \delta} + \left(
\frac{2P - \delta}{2P + 2\delta}\right) \frac{P}{\sigma^2}  \right)
-
\frac{\log 8}{n}\\
& \geq \frac{1}{2} \log \left( \frac{1}{2} +\frac{P}{\sigma^2}
\right) - \delta'
\end{split}
 \end{equation*}

Choosing $\delta'$ arbitrarily small, a rate of $\frac{1}{2} \log
\left( \frac{1}{2} +\frac{P}{\sigma^2} \right)$ can be achieved.

 \subsection{Relationship with ML decoder} There are some
 conditions under which the the minimal angle decoder
 will perform like the ML decoder.
\begin{itemize}
\item[(1)]
  It can be
easily seen that, for Gaussian noise, the ML decoder is equivalent to
minimum distance decoder.
\item[(2)]
If all the codewords are concentrated (with high probability) in the
thin shell then, we do not lose much by neglecting the codewords
outside the thin shell.
\item[(3)] Suppose the width of the thin
shell is very small and almost all the codewords have
approximately the same distance from the origin. Then, calculating
minimum distance from received vector to the codewords is equivalent
to calculating the minimum angle.
\item[(4)]
Suppose almost all the lattice points on the thin shell are codewords.
Then, decoding to any lattice point in the thin shell does not sacrifice
performance.
\end{itemize}

We are dealing with Gaussian noise so the first condition is easily
satisfied. In the course of our proof, we will observe by applying
the Blichfeldt's principle that there exists a concentration of
codewords at the thin shell. Hence condition (2) also holds.
Moreover, we will also let the width of the thin shell become
arbitrarily small, hence what we are doing is very close to ML
decoding. The 4th condition appears not hold to at low SNR, however.
In this case, all lattice points in the thin shell may not be
codewords. Hence, we think this may lead to the sub-optimality at low
SNR. The theorem shows that, for SNR $< 1/2$, we get zero
rate. We know that, for random Gaussian codebooks, joint decoding
of both codewords is possible though.  Therefore, we think that
the ML decoder may give a better performance at low SNR.

\section{Joint Decoding Based Scheme} \label{sec:jointdecoding}

While the aforementioned scheme is nearly optimal at high SNR, the
performance of this scheme at low SNR is very poor. In fact, for
$P/\sigma^2 < 1/2$, the scheme does not even provide a non-zero
rate. In this regime, we can use any coding scheme which is optimal
for the multiple access channel and perform joint decoding at the
relay such that $\uv_A$ and $\uv_B$ can be decoded. Then, the relay
can encode $\uv_A \oplus \uv_B$ and transmit to the nodes. Any
coding scheme that is optimal for the MAC
 channel can be used in the MAC phase.
  A simple scheme is time sharing (although it is not the only one) where nodes $A$ and $B$
transmit with powers $2P$ for a duration of $n/2$ channel uses each
but they do not interfere with one another. In this case, a rate of
\begin{equation}
R_{JD} = \frac{1}{4} \log \left(1 + \frac{2P}{\sigma^2} \right)
\end{equation}
can be obtained. It can be seen that this is optimal at
asymptotically low SNR, since $\log(1+{\sf snr}) \approx {\sf snr}$
for ${\sf snr} \rightarrow 0$. \vspace{1cm}

 \begin{figure}[htbp]
      \centering
   \includegraphics[width=4.0in]{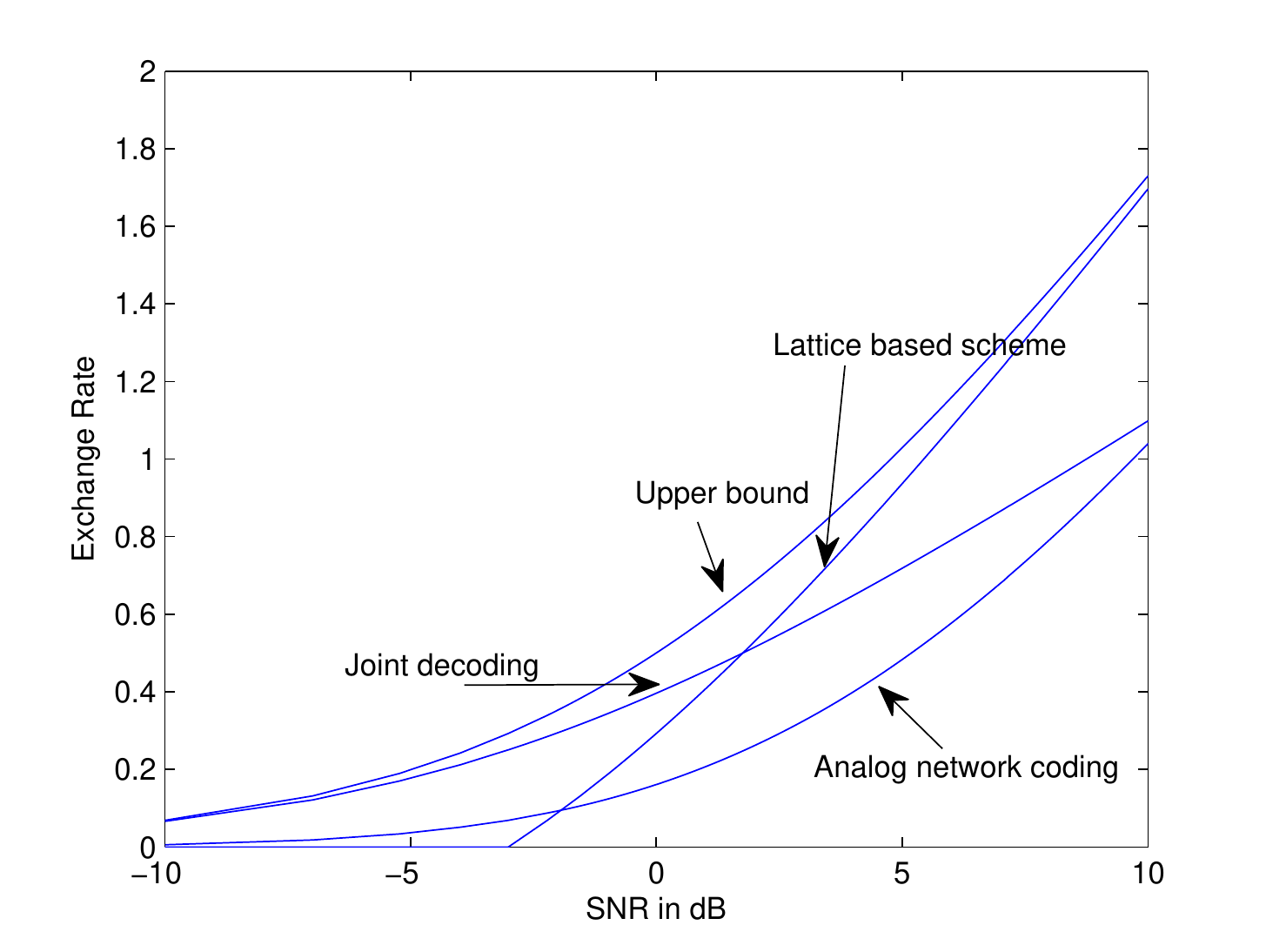}
      \caption{Achievable Exchange Rates for the Proposed Schemes}
      \label{fig:performance}
   \end{figure}

The performance of these schemes is shown in
Fig.~\ref{fig:performance} where the upper bound and the achievable
rates with these proposed schemes are shown. The achievable rate
with the analog network coding scheme is also shown. It can be seen
that our schemes outperform analog network coding. However, it must
be noted here that the scheme proposed here requires perfect synchronization of the
phases from the transmissions, whereas the analog network coding scheme
does not.

Clearly, we can time share between the lattice based scheme and the joint
decoding based scheme in order to obtain rates of the form
\begin{equation}
R_{ex} = \beta R_{ex,JD} + (1-\beta) R_{ex,Lattice}.
\end{equation}

It can be shown that between the SNRs of -0.659~dB and 3.46~dB, time sharing between the
two schemes results in better rates than the individual schemes.
In both the lattice based scheme and the joint decoding based
scheme, if the restriction to use $n$ channel uses during the MAC
phase and $n$ during the broadcast phase is removed, i.e., only the
total number of uses is constrained to be $2n$,  better schemes can
be easily designed. Similarly, different power sharing may also lead
to better schemes.

\section{Extension to multiple hops}

\subsection{Description}

We can extend the above results to multiple hops. We can again show
that rate of $\frac{1}{2} \log\left(\frac{1}{2} +
\frac{P}{\sigma^2}\right)$ is achievable using structured coding
even in the multiple hop scenario. It should be noted that the
advantage of this scheme over the amplify and forward
 scheme \cite{sachin07} becomes more pronounced in the multi-hop case, since at each stage for the amplify and forward scheme,
 the channel noise is amplified and
hence the amplify and forward scheme will suffer a huge rate loss as
the number of hops increase. The problem model is shown in
Fig.~\ref{fig:multihop}. The relay nodes and the nodes $A$ and $B$
can transmit only to the two nearest nodes. During a single
transmission slot ($n$ uses of the channel), a node can either
listen or transmit. That is, it can not do both simultaneously. We
explain our structured coding scheme using a simple example of a
3-relay network. The different transmissions are shown in the table
given below.

Here node $A$ and node $B$ have data that need to be exchanged
between each other. Each node has a stream of packets. Node $A$ has
packets named $\uv_{A,1},\uv_{A,2},\ldots$ and node $B$ has packets
named $\uv_{B,1},\uv_{B,2},\ldots$. In the first transmission slot
the nodes $A$ and $B$ transmit. Nodes $A$, $B$ transmit vectors
$\xv_{A,1}$ and $\xv_{B,1}$, respectively using our proposed lattice
coding scheme. At the beginning of transmission, the node $R_2$ has
no data to transmit in the first transmission slot, and hence it
remains silent. The node $R_1$ and $R_3$ decode to $\xv_{1,1} \mod
\Lambda$ and $\xv_{2,1} \mod \Lambda$, respectively. During the
second transmission slot the nodes $R_1$ and $R_3$ transmit, while
the other nodes remain silent.  So, in each stage the nodes transmit
and the listening nodes decode to a lattice point in the fine
lattice in the Voronoi region of the coarse lattice.
 In every second transmission slot
a new packet is transmitted to the relay nodes by the nodes $A$ and
$B$ as can be seen from Table~\ref{table:multihop}. During slots
$2,4,6$ nodes $A,B$ transmits new packets into the relay channel.
From this example, we can see that at the 4th slot the node $A$ and
$B$ decode $\xv_{B,1}$ and $\xv_{1,2}$ respectively. This is because
the node $A$ receives $(\xv_{1,1}+\xv_{1,2}+\xv_{2,1}) \mod \Lambda$
during the 4th transmission and, hence, since $\xv_{1,1}, \xv_{1,2}$
are already known at the node $A$, the node $A$ can decode to
$\xv_{2,1}$ using modulo operation. The same argument holds for node
$B$. From every two transmissions from this stage a new packet can
be decoded at each node. This shows that for sufficiently large
number packets we can achieve the rate of $\frac12 \log(\frac12 +
\frac{P}{\sigma^2})$. A similar encoding scheme can be used for
$L=2$ nodes also, in the first slot, node A and $R_2$ transmit,
while the others listen. In the next slot $R_1$ and node $B$
transmit while the others listen and decode. Again the same rate of
$\frac12 \log(\frac12 + \frac{P}{\sigma^2})$ is achievable.

\subsection{Achievable rate}

\begin{thm}
For the multi hop scenario defined with $L$ hops, an exchange rate
of $\frac{1}{2} \log (\frac12 + \frac{P}{\sigma^2})$ is achievable
with nested lattice encoding and decoding.
\end{thm}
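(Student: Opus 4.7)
The plan is to reduce the analysis of the multi-hop scheme to the single-hop lattice result already established, and to handle the timing/pipelining bookkeeping separately from the information-theoretic content. Throughout, every node uses the \emph{same} nested lattice pair $(\Lambda_f,\Lambda_c)$ of coding rate $R<\tfrac12\log(\tfrac12+P/\sigma^2)$ guaranteed by Theorem~\ref{NestedLatticeTheorem}, so that sums of codewords modulo $\Lambda_c$ remain in $\{\Lambda_f\cap\mathcal{V}(\Lambda_c)\}$ and the group structure carries through.

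First I would formalize the per-slot operation. In any slot in which a relay $R_i$ is listening while its two neighbors transmit, the received signal is of the exact same form as the MAC-phase signal in Section~\ref{sec:lattice}: two independent dithered lattice codewords passed through an AWGN channel with variance $\sigma^2$. The theorem already proved for the single hop therefore guarantees that $R_i$ can decode $(\tv_{\text{left}}+\tv_{\text{right}})\bmod\Lambda_c$ with error probability $P_e^{(i)}\to 0$ as $n\to\infty$ at any rate $R<\tfrac12\log(\tfrac12+P/\sigma^2)$. Slots in which only one neighbor transmits are even easier: they are ordinary point-to-point AWGN transmissions at a rate strictly below AWGN capacity, so the same codebook decodes reliably.

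Next I would handle the pipelining to show that the exchange rate equals the per-slot rate. Using the alternating schedule illustrated in Table~\ref{table:multihop}, after a startup transient of $O(L)$ slots, the network enters a steady state in which, during every pair of consecutive slots, one fresh packet enters the pipeline from each endpoint and one fresh packet arrives at each endpoint. Each slot carries $n$ channel uses, each packet carries $k=nR$ bits, and the MAC/broadcast convention of the paper counts time in the same units as the single-hop setting; thus the amortized exchange rate approaches $R$ as the number of packets $M\to\infty$, while the startup loss is $O(L/M)\to 0$.

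Third I would verify that the endpoints can recover their desired messages by a telescoping modulo argument. A listening node at position $i$ receives, after $j$ hops of aggregation, a sum of lattice codewords of the form $\bigl(\sum_\ell c_\ell \tv_{A,\ell}+\sum_\ell c_\ell' \tv_{B,\ell}\bigr)\bmod\Lambda_c$ for coefficients determined entirely by the deterministic schedule. Since node $A$ knows all of its own past codewords $\{\tv_{A,\ell}\}$ exactly, and the coefficients are known, it can subtract them modulo $\Lambda_c$ (this is where the nested-lattice group structure is essential) and recover the desired $\tv_{B,\ell}$. Node $B$ proceeds symmetrically. Finally, I would apply the union bound over all decoding events in the transmission horizon: the number of decoding events is $O(ML)$, each has probability at most $\epsilon_n\to 0$, so for any fixed $L$ and $M$ the overall error probability tends to $0$ as $n\to\infty$.

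The main obstacle I anticipate is precisely the telescoping step: one has to argue carefully that, given the half-duplex schedule and the staggered arrival times of packets, the linear combination that eventually reaches an endpoint really does consist only of codewords that endpoint already knows, together with a single unknown codeword from the other side. This is a combinatorial claim about the schedule rather than an information-theoretic one, but it is what makes the modulo cancellation clean; the rest of the argument then follows by direct appeal to Theorem~\ref{NestedLatticeTheorem} and a union bound.
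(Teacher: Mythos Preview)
Your proposal is correct and follows essentially the same approach as the paper: reduce each listening event to the single-hop MAC-phase lattice decoding of Theorem~\ref{NestedLatticeTheorem}, observe that after an $O(L)$ startup the schedule delivers one fresh packet to each endpoint every two slots, and use the group structure of the nested lattice so that each endpoint can cancel all known codewords modulo $\Lambda_c$ to isolate the single unknown one. Your version is in fact more carefully articulated than the paper's own proof (which is only a sketch), in particular by making the union bound over $O(ML)$ decoding events explicit and by noting that fresh dithers at each relay are what guarantee the transmitted signals remain uniform over $\mathcal{V}(\Lambda_c)$ even though the underlying lattice points being forwarded are correlated.
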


\begin{figure}[thpb]
      \centering
     \includegraphics[width=5.0in]{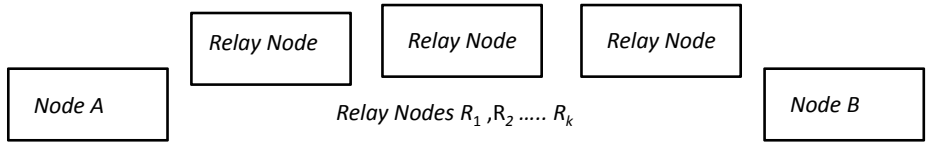}
      \caption{Problem model for multi hop}
      \label{fig:multihop}
   \end{figure}

\begin{table*}
\begin{center}
\caption{Table Showing Sequence of Packets in the Multihop Case}
\begin{tabular}[t]{|c|p{2cm}|p{2.7cm}|p{2.7cm}|p{2.7cm}|p{2.5cm}|}

  \hline
   Slot &Node $A$ & Node $R_1$ $(\mod \Lambda)$ & Node $R_2$ $(\mod \Lambda)$ & Node $R_3$ $(\mod \Lambda)$ &
  Node $B$
  \\ \hline
  1 & Transmits $\xv_{1,1}$ & $\xv_{1,1}$ & Transmits & $\xv_{2,1}$ & Transmits $\xv_{2,1}$ \\
  2 & Remains Silent & Transmits & $\xv_{1,1} + \xv_{2,1}$ & Transmits & Remains Silent \\
  3 & Transmits $\xv_{1,2}$ & $\xv_{1,2}+\xv_{1,1}+\xv_{2,1}$ & Transmits & $\xv_{1,1}+\xv_{2,1}+\xv_{2,2}$ & Transmits $\xv_{2,2}$ \\
  4 & Decodes $\xv_{2,1}$ & Transmits & $2(\xv_{1,1} + \xv_{2,1}) + \xv_{1,2}+\xv_{2,2}$ & Transmits & Decodes $\xv_{1,1}$ \\
  5 & Transmits $\xv_{1,3}$ & $2(\xv_{1,1} + \xv_{2,1})$  $+ \xv_{1,2}+\xv_{2,2} + \xv_{1,3}$ & Transmits & $2(\xv_{1,1}+ \xv_{2,1})$  $+
  \xv_{1,2}+\xv_{2,2}+ \xv_{2,3}$ & Transmits $\xv_{2,3}$ \\
  6 & Decodes $\xv_{2,2}$ & Transmits  & $4(\xv_{1,1} +\xv_{2,1}) + 2(\xv_{1,2}+\xv_{2,2}) +\xv_{1,3} + \xv_{2,3}$ & Transmits & Decodes
  $\xv_{1,2}$ \\
  \hline
\end{tabular}
\label{table:multihop}
\end{center}
\end{table*}

\begin{proof}
      We can easily prove that the theorem holds for a general
      case $L$ relay nodes in between. In our coding
      scheme  in every two slots a new packet is sent out
      from the nodes $A$ and $B$.  After an initial $2L$ transmission slot
      delay, in every two slots the relay
      nodes  receives  a new packet from the other nodes. Here, we
      mean that every two slots the relay node decodes to a lattice
      point which is a linear function of a new packet. Hence, at the
      decoding stage at the nodes $B$ and $A$, we can decode after every
      two slots since only one variable is unknown, since only one new
      packet (or function of new packet) moves from one node to the
      other. Hence, we can still achieve the rate of $\frac12 \log (\frac12 +
      \frac{P}{\sigma^2})$. Moreover, the functions in each stage are
      bounded  for a finite $L$ and, hence, we can always perform the decoding at the
      receiver nodes.

\end{proof}

\section{Conclusion}

We considered joint physical layer and network layer coding for the
bi-directional relay problem where two nodes wish to exchange
information through AWGN channels. Under the restrictive model of
the MAC and broadcast phase using $n$ channel uses separately, we
showed upper bounds on the exchange capacity and constructive
schemes based on lattices that is nearly optimal at high SNR. At
low SNR joint decoding based schemes (optimal coding schemes for
the MAC channel) are nearly optimal. These schemes outperform the
recently proposed analog network coding. Interestingly, our result
shows that structured codes such as lattice codes outperform random
codes for such networking problems. We also showed that minimal
angle decoding also leads to similar results. We also showed
extensions of this scheme to a network with many relay nodes, where
the advantages of the proposed scheme over simple amplify and
forward will be higher than in single relay case.

\useRomanappendicesfalse
\appendices

\section{Blichfeldt's Principle and Minkowski-Hlawka Theorem}

\begin{thm}[Blichfeldt's Principle \cite{Minkowski-Hlawka}]\label{thm:blichfeldt}
Let $f$ be a Riemann integrable function with bounded support. If
$\Lambda_n$ is a lattice with fundamental region $P_n$ then
\begin{equation*}
\int_{\mathds{R}^n} f(\sv) dV(\sv) = \int_{P_n} \left( \sum_{h \in
\Lambda_n} f(h + \sv) \right) dV(\sv).
\end{equation*}

\end{thm}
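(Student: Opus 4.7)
My plan is to prove Blichfeldt's Principle by exploiting the fact that translates of the fundamental region $P_n$ by lattice vectors tile $\mathbb{R}^n$ exactly (up to a measure-zero boundary). The identity then falls out from a straightforward change of variables applied piece by piece.

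First I would invoke the defining property of the fundamental region: $\mathbb{R}^n = \bigsqcup_{h \in \Lambda_n} (h + P_n)$, where the union is disjoint up to a set of Lebesgue (and hence Riemann, given the bounded support of $f$) measure zero. Using countable additivity of the integral over disjoint pieces, I would write
\begin{equation*}
\int_{\mathds{R}^n} f(\sv)\, dV(\sv) = \sum_{h \in \Lambda_n} \int_{h + P_n} f(\sv)\, dV(\sv).
\end{equation*}
Next, in each summand I would perform the translation $\sv = \tv + h$, which has unit Jacobian, transforming the domain $h + P_n$ back to $P_n$ and giving
\begin{equation*}
\int_{h + P_n} f(\sv)\, dV(\sv) = \int_{P_n} f(\tv + h)\, dV(\tv).
\end{equation*}

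The last step is to interchange the sum over $h \in \Lambda_n$ with the integral over $P_n$. This is the only step that needs a small justification, but it is essentially immediate because $f$ has bounded support: for each fixed $\tv \in P_n$ the set $\{h \in \Lambda_n : f(\tv + h) \neq 0\}$ is finite (indeed uniformly finite over $\tv \in P_n$, since the support of $f$ translated by $-\tv$ meets only finitely many lattice points), so the inner sum $\sum_{h \in \Lambda_n} f(\tv + h)$ is a finite Riemann integrable function of $\tv$ and Fubini/Tonelli applies. Combining the three displays yields the stated identity.

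I do not anticipate a serious obstacle here; the main care-point is the bounded-support hypothesis, which simultaneously guarantees (i) that the sum over lattice translates has only finitely many nonzero terms for each $\sv$, (ii) that interchanging sum and integral is legitimate, and (iii) that the measure-zero boundary overlaps between adjacent copies of $P_n$ do not contribute. No additional machinery beyond change of variables and Fubini is needed.
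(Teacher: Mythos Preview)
Your proof is correct and is the standard argument: tile $\mathbb{R}^n$ by lattice translates of the fundamental region, change variables in each tile, and use bounded support to justify interchanging the finite sum with the integral. There is nothing to add.

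Note, however, that the paper does not actually prove this theorem; it is stated in the appendix with a citation to the reference and used as a black box. So there is no paper proof to compare against, and your argument simply supplies what the paper omits.
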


 Let us define the following function $ V_{\oplus}$ as
 follows,
 \begin{equation*}
 V_{\oplus} = \int \int \chi_{T}(\uv) \chi_{T}(\vv)
  dV(\uv)dV(\vv).
 \end{equation*} Here $ V_{\oplus} = (V_n(\sqrt{nP}))^2$, represents the square of
 the volume of an $n$-dimensional sphere of radius $\sqrt{nP}$. $dV$
 represents the $n$-dimensional volume element in rectangular co-ordinates.

We next establish the following corollary

\begin{cor}\label{cor:blichapp}
\begin{equation*}
 V_{\oplus} = \int_{\mathbf{s_1}} \int_{\mathbf{s_2}} M_{\oplus}(\Lambda_n, \mathbf{s_1},\mathbf{s_2})
  dV(\mathbf{s_1})dV(\mathbf{s_2}).
 \end{equation*}
\end{cor}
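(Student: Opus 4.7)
The plan is to recognize that the corollary is essentially Blichfeldt's principle applied separately to each of the two translation variables $\mathbf{s}_1, \mathbf{s}_2$, since the integrand factors cleanly as a product over the two transmitters.

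First I would unpack $M_{\oplus}(\Lambda_n,\mathbf{s}_1,\mathbf{s}_2)$ using the earlier definitions: since $\mathcal{C}_1 = (\Lambda_n+\mathbf{s}_1)\cap T_{\sqrt{P}}$ and $\mathcal{C}_2=(\Lambda_n+\mathbf{s}_2)\cap T_{\sqrt{P}}$ are independently chosen sets and $\mathcal{C}_\oplus = \mathcal{C}_1\times \mathcal{C}_2$, the count factors as
\begin{equation*}
M_{\oplus}(\Lambda_n,\mathbf{s}_1,\mathbf{s}_2) = \left(\sum_{h_1\in\Lambda_n} \chi_T(h_1+\mathbf{s}_1)\right)\left(\sum_{h_2\in\Lambda_n} \chi_T(h_2+\mathbf{s}_2)\right).
\end{equation*}
Here $\chi_T$ denotes the indicator of $T_{\sqrt{P}}$, so each sum just counts the lattice points of the translated lattice lying in $T_{\sqrt{P}}$.

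Next I would substitute this factored form into the right-hand side of the corollary, with the understanding (standard in this context) that $\mathbf{s}_1,\mathbf{s}_2$ range over the fundamental region $P_n$. Because the product separates, Fubini gives
\begin{equation*}
\int_{P_n}\!\!\int_{P_n} M_{\oplus}(\Lambda_n,\mathbf{s}_1,\mathbf{s}_2)\,dV(\mathbf{s}_1)\,dV(\mathbf{s}_2) = \prod_{i=1}^{2}\int_{P_n}\sum_{h_i\in\Lambda_n}\chi_T(h_i+\mathbf{s}_i)\,dV(\mathbf{s}_i).
\end{equation*}
Applying Blichfeldt's principle (Theorem~\ref{thm:blichfeldt}) to $f=\chi_T$ in each factor individually converts each sum-over-lattice-translates integral over $P_n$ into a single unrestricted integral $\int_{\mathds{R}^n}\chi_T(\uv)\,dV(\uv) = V_n(\sqrt{nP})$. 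The product of the two factors is $V_n(\sqrt{nP})^2$, which by the definition of $V_{\oplus}$ equals $\int\!\int \chi_T(\uv)\chi_T(\vv)\,dV(\uv)\,dV(\vv) = V_{\oplus}$, finishing the proof.

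There is no real obstacle here; the only point that needs mild care is justifying the use of Blichfeldt's principle with $f=\chi_T$ (the indicator of a bounded ball), which is Riemann integrable with bounded support and therefore within the hypothesis of Theorem~\ref{thm:blichfeldt}, and making explicit that the outer integrals are taken over a fundamental domain $P_n$ so that the Blichfeldt identity applies verbatim.
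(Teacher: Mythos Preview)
Your proof is correct and uses essentially the same idea as the paper: apply Blichfeldt's principle once in each translation variable. The only organizational difference is that you exploit the factorization $M_\oplus = M_1(\Lambda_n,\mathbf{s}_1)\,M_2(\Lambda_n,\mathbf{s}_2)$ up front and then apply Blichfeldt to each one-variable factor, whereas the paper applies Blichfeldt iteratively to the two-variable function $f(\uv,\vv)=\chi_T(\uv)\chi_T(\vv)$ and only afterwards identifies the double lattice sum with $M_\oplus$. For this corollary the two routes are equivalent; the paper's iterated formulation has the mild advantage that it is the template reused later for quantities like $M_\oplus'$ where the integrand involves $\chi_{T'_{\sqrt{2P}}}(\uv+\vv)$ and therefore does not factor.
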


\begin{proof} Let us define a function
\begin{equation} f(\uv,\vv) = \chi_{T}(\uv) \chi_{T}(\vv).
\end{equation}
For a fixed $\uv$, $f(\uv,\vv)$ can be seen as a function with
bounded support and also can be seen to be integrable. Hence we can
apply the Blichfeldt's principle to get
\begin{equation*}
h(\uv) = \int f(\uv,\vv) dV(\vv) = \int_{P_n} \left( \sum_{h_2 \in
\Lambda_n} f(\uv, h_2 + \mathbf{s_2}) \right) dV(\mathbf{s_2}).
\end{equation*} Now $h(\uv)$ can again be seen as  a Riemann
integrable function with bounded support, and hence the Blichfeldt's
principle could be applied again to get the following,
\begin{eqnarray*}
 V_{\oplus} &=& \int h(\uv) dV(\uv) \\
  &=& \int_{P_n} \sum_{h_1 \in \Lambda_n} \left( \int_{P_n} \left( \sum_{h_2 \in
\Lambda_n} f(h_1 + \mathbf{s_1}, h_2 + \mathbf{s_2}) \right)
dV(\mathbf{s_2})\right) dV(\mathbf{s_1})\\
 &\stackrel{(a)}=& \int_{P_n}   \int_{P_n} \left( \sum_{h_1 \in \Lambda_n} \sum_{h_2 \in
\Lambda_n} f(h_1 + \mathbf{s_1}, h_2 + \mathbf{s_2})
\right) dV(\mathbf{s_1})dV(\mathbf{s_2})\\
&\stackrel{(b)}=& \int_{P_n}   \int_{P_n} M_{\oplus}(\Lambda_n,
\mathbf{s_1},\mathbf{s_2})
 dV(\mathbf{s_1})dV(\mathbf{s_2})\\
 \end{eqnarray*}

 Above (a) follows since we have a finite number of non-zero terms,
 and hence the integral and the summation can be interchanged. Also
 in (b), $M_{\oplus}(\Lambda_n,
\mathbf{s_1},\mathbf{s_2})$ is the number of pairs of lattice points
for the translations $\mathbf{s_1}$ and $\mathbf{s_2}$.

\end{proof}

In short the Corollary 8 relates the square of the volume of an
$n$-dimensional sphere and number of pairs of lattice points for
different translations.

\begin{thm}[Minkowski-Hlawka]\label{thm:Minkowski-hlawka} Let $f$ be a nonnegative Riemann
integrable function with bounded support. Then for every $d \in
\mathds{R}^+$ and $n \geq 2$, there exists a lattice $\Lambda_n$
with determinant $det(\Lambda_n) = d$ such that
\begin{equation*}
d \sum_{g \in \Lambda_n\setminus \{0\}} f(g) \leq \int f dV(\xv)
\end{equation*}

\end{thm}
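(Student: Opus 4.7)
The plan is to use a probabilistic/averaging argument over a finite family of lattices of determinant $d$, following the classical Siegel--Hlawka approach. The key idea is to construct a family of candidate lattices, show that the expected value (with respect to a uniform distribution on the family) of $\sum_{g \in \Lambda_n \setminus \{0\}} f(g)$ asymptotically equals $d^{-1} \int f\, dV$, and then conclude by the mean-value principle that at least one lattice in the family achieves the inequality.

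First I would reduce to the unimodular case ($d=1$) by rescaling: if $\Lambda_0$ has determinant $1$, then $\Lambda = d^{1/n}\Lambda_0$ has determinant $d$, and an inequality for $\Lambda_0$ applied to the rescaled function $\tilde f(\mathbf{x}) = f(d^{1/n}\mathbf{x})$ (which satisfies $\int \tilde f\,dV = d^{-1}\int f\,dV$) translates directly into the desired inequality for $\Lambda$ applied to $f$.

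Next I would parametrize a finite family of unimodular lattices by a large prime $p$ and residues $\mathbf{a} = (a_2,\ldots,a_n) \in (\mathbb{Z}/p\mathbb{Z})^{n-1}$:
\[
\Lambda_p(\mathbf{a}) = p^{-(n-1)/n}\{(k_1,\ldots,k_n) \in \mathbb{Z}^n : k_i \equiv a_i k_1 \pmod{p},\ i=2,\ldots,n\}.
\]
Each such lattice is a rescaled sublattice of $\mathbb{Z}^n$ of index $p^{n-1}$, so $\det\Lambda_p(\mathbf{a}) = 1$. For a nonzero $\mathbf{k} \in \mathbb{Z}^n$ with $p \nmid k_1$, the congruences pin down $\mathbf{a}$ uniquely, so the probability over uniformly random $\mathbf{a}$ that $\mathbf{k}$ satisfies them is $p^{-(n-1)}$; if $p \mid k_1$, the congruences force $p \mid k_i$ for every $i$, i.e.\ $\mathbf{k} \in p\mathbb{Z}^n$, and such $\mathbf{k}$ lie in $\Lambda_p(\mathbf{a})$ for every $\mathbf{a}$. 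Taking the expectation over $\mathbf{a}$ yields
\[
\mathbb{E}_{\mathbf{a}}\!\left[\sum_{g \in \Lambda_p(\mathbf{a}) \setminus \{0\}} \tilde f(g)\right] = \frac{1}{p^{n-1}} \sum_{\substack{\mathbf{k}\in \mathbb{Z}^n \\ p\,\nmid\, k_1}} \tilde f\!\left(\frac{\mathbf{k}}{p^{(n-1)/n}}\right) + \sum_{\mathbf{m}\in \mathbb{Z}^n\setminus\{0\}} \tilde f\bigl(p^{1/n}\mathbf{m}\bigr).
\]
The second sum vanishes for all sufficiently large $p$ since $\tilde f$ has bounded support. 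The first sum is a Riemann sum for $\int \tilde f\,dV$ over the lattice $p^{-(n-1)/n}\mathbb{Z}^n$ (whose fundamental cell has volume $p^{-(n-1)}$), minus the contribution of the hyperplanes $k_1 \equiv 0 \pmod p$, which account for only a fraction $O(1/p)$ of the points. Riemann integrability of $\tilde f$ together with its bounded support then yields convergence of the average to $\int \tilde f\,dV = d^{-1}\int f\,dV$ as $p \to \infty$.

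For any $\epsilon > 0$ and $p$ large enough, the average is at most $d^{-1}\int f\,dV + \epsilon$, so some $\mathbf{a}^\star$ in the family gives a unimodular lattice with $\sum_{g \neq 0}\tilde f(g) \leq d^{-1}\int f + \epsilon$; rescaling by $d^{1/n}$ produces the desired determinant-$d$ lattice, and an arbitrarily small $\epsilon$ is sufficient for all applications elsewhere in the paper, where the target rate is stated with strict inequality. The main technical obstacle is the Riemann-sum approximation step: one must justify that the removed hyperplane $k_1 \equiv 0 \pmod p$ contribution and the exceptional $p\mathbb{Z}^n$ contribution are both asymptotically negligible, which rests essentially on the bounded support and Riemann integrability of $f$. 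A cleaner but less elementary alternative would be to invoke Siegel's mean value theorem for $SL_n(\mathbb{R})$ directly and specialize it to the case $d=1$.
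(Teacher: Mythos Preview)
The paper does not actually prove this theorem: Theorem~\ref{thm:Minkowski-hlawka} is quoted without proof from the reference of Hlawka, Schoi{\ss}engeier and Taschner, and is used only as a black box in Lemma~\ref{lem:existencelatticeMinkowski}. So there is no ``paper's own proof'' to compare against.

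That said, your argument is essentially the classical Hlawka averaging proof and is correct. The family $\Lambda_p(\mathbf{a})$ you construct is exactly the one used in standard treatments, and your counting of how many $\mathbf{a}$ contain a given nonzero $\mathbf{k}$ is right, as is the observation that the exceptional contribution from $p\mathbb{Z}^n$ vanishes once $p^{1/n}$ exceeds the radius of the support of $\tilde f$. The Riemann-sum step is where Riemann integrability and bounded support are genuinely used, and you have identified that correctly.

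One point you already flagged deserves emphasis: your argument, as written, produces a lattice with $d\sum_{g\neq 0} f(g) \leq \int f\,dV + \epsilon$ for arbitrary $\epsilon>0$, not the sharp inequality stated in the theorem. You are right that this is harmless for the application in Section~\ref{sec:minangledecoding}, since the achievable rate there is stated with a strict inequality and the bound on the third error term only needs to be beaten by an asymptotically vanishing margin. If you wanted the inequality exactly as stated, you would need a further limiting or compactness step; the version with $\epsilon$ is the one most textbooks actually prove and is all the paper needs.
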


 The Minkowski-Hlawka theorem gives us a way to connect a series of
 discrete sums with a continuous integral. This will find
 applications in our probability of error calculations.

\section{Hyper Volume Concentration Lemma}

\begin{lem}\label{lem:Hyper Volume Concentration Lemma}
 Let $V_{\oplus}'$ be defined as
 \begin{equation*}
 V_{\oplus}^\prime = \int \int \chi_T(\uv) \chi_T(\vv) \chi_{T^\prime_{\sqrt{2P}}}(\uv + \vv) dV(\uv)
 dV(\vv)
  \end{equation*} then  we can choose
  $n$ sufficiently large such that,
   $ \frac{V_{\oplus}'}{V_{\oplus}} <
   \delta,
   $ for every given positive $\delta$.

\end{lem}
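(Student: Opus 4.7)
\subsection*{Proof Proposal}

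The plan is to recognize $V_{\oplus}'/V_{\oplus}$ as the probability that the sum of two independent random points, each uniformly distributed in the ball $T_{\sqrt{P}}$ of radius $\sqrt{nP}$, falls outside the thin shell $T_{\sqrt{2P}}^{\Delta}$, and then to kill this probability by concentration of measure in high dimension. Concretely, if $\uv,\vv$ are i.i.d.\ uniform on $T_{\sqrt{P}}$, then
\begin{equation}
\frac{V_{\oplus}'}{V_{\oplus}} \;=\; \Pr\!\bigl[\,\|\uv+\vv\|^{2} \notin [\,n(2P-\Delta),\,n(2P+\Delta)\,]\,\bigr],
\end{equation}
where $\Delta>0$ is the (fixed) width parameter appearing in the definition of $T_{\sqrt{2P}}^{\Delta}$, and the claim $V_{\oplus}'/V_{\oplus}<\delta$ amounts to making this probability smaller than the prescribed $\delta>0$.

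The key step is to expand $\|\uv+\vv\|^{2}=\|\uv\|^{2}+\|\vv\|^{2}+2\langle\uv,\vv\rangle$ and show that each piece concentrates on a scale much smaller than $n\Delta$. Using the radial density of a uniform point in the ball, I would compute $\mathbb{E}[\|\uv\|^{2}]=\tfrac{n}{n+2}\,nP$ and verify that $\mathrm{Var}(\|\uv\|^{2})=\Theta(1)$ in $n$; the same holds for $\vv$. For the cross term, rotational symmetry gives $\mathbb{E}[\langle\uv,\vv\rangle]=0$ and $\mathbb{E}[u_{i}u_{j}]=0$ for $i\neq j$ with $\mathbb{E}[u_{i}^{2}]=nP/(n+2)$; by independence of $\uv$ and $\vv$ this yields
\begin{equation}
\mathrm{Var}\bigl(\langle\uv,\vv\rangle\bigr) \;=\; \sum_{i=1}^{n}\bigl(\mathbb{E}[u_{i}^{2}]\bigr)^{2} \;=\; \Theta(n),
\end{equation}
so that the standard deviation of $2\langle\uv,\vv\rangle$ is $O(\sqrt{n})$.

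Two applications of Chebyshev's inequality together with a union bound then show that $\|\uv+\vv\|^{2}$ lies within an interval of width $O(\sqrt{n})$ around its mean $\tfrac{2n}{n+2}\,nP = 2nP - O(1)$ with probability $1-O(1/n)$. Since the target shell has half-width $n\Delta$ (a fixed positive fraction of $n$), for all sufficiently large $n$ this concentration interval is contained strictly inside $[n(2P-\Delta),n(2P+\Delta)]$, and the exit probability drops below any prescribed $\delta>0$.

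The main obstacle is the variance computation for the inner product: one has to be careful that the mixed coordinate moments vanish only after the angular symmetry of the uniform distribution on the ball is invoked, so that the sum in the variance reduces cleanly to $n$ copies of $(\mathbb{E}[u_{i}^{2}])^{2}$. The remaining estimates (second and fourth radial moments, Chebyshev, union bound) are routine.
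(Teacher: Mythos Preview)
Your probabilistic approach is correct and is genuinely different from the paper's. Recognizing $V_{\oplus}'/V_{\oplus}$ as $\Pr[\|\uv+\vv\|^{2}\notin[n(2P-\Delta),n(2P+\Delta)]]$ for $\uv,\vv$ i.i.d.\ uniform on the ball, then computing $\mathbb{E}[\|\uv+\vv\|^{2}]=2nP-O(1)$ and $\mathrm{Var}(\|\uv+\vv\|^{2})=\Theta(n)$ and applying Chebyshev with deviation $n\Delta/2$ gives failure probability $O(n)/(n\Delta)^{2}=O(1/n)\to 0$, which is exactly what the lemma asks. One quantitative slip: Chebyshev with variance $\Theta(n)$ cannot simultaneously deliver an $O(\sqrt{n})$ window \emph{and} an $O(1/n)$ failure probability; you get one or the other. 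This is harmless, though, because the target shell has half-width $n\Delta$, so a single Chebyshev application at scale $n\Delta/2$ already suffices, and the ``two applications plus union bound'' can be dropped.

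The paper proceeds very differently: it changes variables to $\xv=\uv+\vv$, recognizes the inner integral $\int\chi_{T}(\uv)\chi_{T}(\xv-\uv)\,dV(\uv)$ as the volume of the lens formed by two intersecting balls of radius $\sqrt{nP}$ at center distance $\|\xv\|$, bounds this by twice a spherical-cap volume $V_{cs}(\|\xv\|)$, and then carries out an explicit spherical-coordinate integration over the complement of the shell. After substituting Shannon's bound $\int_{0}^{\theta}\sin^{n-2}\psi\,d\psi\le \sin^{n-1}\theta/((n-1)\cos\theta)$ and splitting the $\theta$-range at $\pi/3+\eta$, both pieces are shown to decay \emph{exponentially} in $n$. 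So the paper's argument is a direct geometric/analytic computation that yields exponential decay, whereas your moment/Chebyshev route is far shorter and more conceptual but only gives a polynomial rate $O(1/n)$. For the lemma as stated (merely $V_{\oplus}'/V_{\oplus}<\delta$ for large $n$) your argument is fully adequate; the paper's sharper bound is not used quantitatively elsewhere, so nothing is lost by taking your route.
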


\begin{proof}

First we perform a change of variables in the integral, by
substituting $\xv = \uv + \vv$. This gives,
\begin{equation*}
 V_{\oplus}^\prime = \int \int \chi_T(\uv) \chi_T(\xv -\uv) \chi_{T^\prime_{\sqrt{2P}}}(\xv) dV(\uv)
 dV(\xv)
  \end{equation*}

\begin{figure}[htbp]
\begin{center}
\includegraphics[scale=0.4,angle = 0]{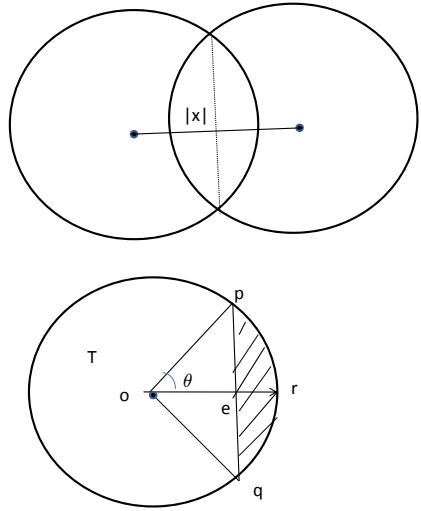}
\end{center}
\caption{Geometrical interpretation of the integral}
\label{fig:conv}\end{figure}

Let us consider first the inner integral, for a fixed $\xv$,given
by,
\begin{equation*} \int \chi_T(\uv) \chi_T(\xv -\uv)  dV(\uv).
\end{equation*} This geometrically represents the hyper volume of
intersection of two hyper-spheres, whose centers are at a distance
$\| \xv \|$, from each other. This is pictorially shown in
Fig.~\ref{fig:conv}. The calculation of hyper volume of
intersection, reduces to obtaining the hyper volume of the conical
section and a cone. This is shown pictorially in the second diagram
in Fig.~\ref{fig:conv}. Here $opq$ represents the hyper cone and
$oprq$ represents the conical section. Here we denote by
$V_{cs}(|\xv|)$ to represent the volume of the conical section and
$V_{co}(|\xv|)$ as the volume of the cone. The integral can hence be
be evaluated as
\begin{equation*}\int_T \chi_T(\vv) \chi_T( \mathbf{x} - \mathbf{v})
 d\mathbf{v} = 2 (V_{cs}(|\xv|) - V_{co}(|\xv|)).\end{equation*}

To simplify calculations we can bound the integral as,
\begin{equation*}\int_T \chi_T(\vv) \chi_T( \mathbf{x} - \mathbf{v})
 dV(\mathbf{v}) \leq 2 V_{cs}(|\xv|) .\end{equation*}

From the figure we can see $oe = \|x\| /2 $. Hence the half-angle
$\theta$ can be calculated as $\cos \theta = \frac{oe}{\sqrt{nP}} =
\frac{\|x\|}{2 \sqrt{nP}}.$ Hence with the defined $\theta$, we can
calculate the hyper volume as,

\begin{equation}
V_{cs}(|\xv|) = 2 \left(\int_{\psi=0}^{\theta} \sin^{n-2} \psi
d\psi\right) \frac{\pi^{\frac{n-1}{2}}(nP)^{\frac{n}{2}}}{n
\Gamma(\frac{n-1}{2})}
\end{equation}

Hence,

\begin{equation}
\frac{V_{\oplus}'}{V_\oplus} \leq 2  \int
\frac{V_{cs}(\|\xv\|)}{V_\oplus} \chi_{T^\prime_{\sqrt{2P}}}(\xv)
dV(\xv)
\end{equation}

But $V_\oplus$ is given by $V^2$, where $V$ is the hyper volume of
an n-dimensional hyper sphere of radius $\sqrt{nP}$ denoted by
$V_n(\sqrt{nP}).$

The value of $V_{cs}(\|x\|)$  depends only on the distance of $\xv$
from the origin. To make evaluation of the integral easier, we change the
volume element to circular co-ordinates and integrate. Thus the
integral now becomes,

\begin{equation}
\frac{V_{\oplus}'}{V_\oplus} \leq   2 \left( \frac{n
\pi^{\frac{n}{2}}}{\Gamma(\frac{n}{2}+1)} \right) \int_{r \in
l_{\oplus}'}  \left. \frac{V_{cs}(r)}{V_n({\sqrt{nP}})^2} \right.
r^{n-1} dr,
\end{equation}
where $l_{\oplus}'$ is defined as the union of closed
intervals,(will be given later). Now substituting $V_{cs}(r)$ from
above gives,

\begin{equation}
\frac{V_{\oplus}'}{V_\oplus} \leq 4 \frac{n
\pi^{\frac{n}{2}}}{\Gamma(\frac{n}{2}+1)}
\frac{\pi^{\frac{n-1}{2}}(nP)^{\frac{n}{2}}}{n
\Gamma(\frac{n-1}{2})} \int_{r \in l_{\oplus}'}
\int_{\psi=0}^{\cos^{-1}(\frac{r}{2 \sqrt{nP}})}
  \frac{r^{n-1} \sin^{n-2} \psi}{V_n({\sqrt{nP}})^2} d\psi dr
\end{equation}

Now let us choose $\cos \theta = \frac{r}{2 \sqrt{nP}}$. Then change
of variables gives

\begin{equation}
\frac{V_{\oplus}'}{V_\oplus} \leq 4 \frac{n
\pi^{\frac{n}{2}}}{\Gamma(\frac{n}{2}+1)}
\frac{\pi^{\frac{n-1}{2}}(nP)^{\frac{n}{2}}}{n
\Gamma(\frac{n-1}{2})} \int_{\theta \in \theta_{\oplus}'} (nP)^{n/2}
\cos^{n-1} \theta  \sin \theta \int_{\psi=0}^{\theta}
  \frac{2^{n} \sin^{n-2} \psi}{V_n({\sqrt{nP}})^2} d\psi d \theta
\end{equation}

Substituting for $V_n(\sqrt{nP})$, we get,
\begin{equation}
\frac{V_{\oplus}'}{V_\oplus} \leq \frac{4}{\sqrt{\pi}}
\frac{\Gamma(\frac{n}{2}+1)}{\Gamma(\frac{n-1}{2})} \int_{\theta \in
\theta_{\oplus}'} 2^{n} \cos^{n-1} \theta \sin \theta
\int_{\psi=0}^{\theta}
   \sin^{n-2} \psi d\psi d \theta
\end{equation}

\begin{equation}
\begin{split}
\frac{V_{\oplus}'}{V_\oplus} \leq \frac{4}{\sqrt{\pi}}
\frac{\Gamma(\frac{n}{2}+1)}{\Gamma(\frac{n-1}{2})} \int_{\theta \in
\theta_{\oplus}' \cap [0, \frac{\pi}{3} + \eta ]} 2^{n} \cos^{n-1}
\theta \sin \theta \int_{\psi=0}^{\theta}
   \sin^{n-2} \psi d\psi d \theta \\
    + \frac{4}{\sqrt{\pi}}
\frac{\Gamma(\frac{n}{2}+1)}{\Gamma(\frac{n-1}{2})} \int_{\theta \in
\theta_{\oplus}' \cap [ \frac{\pi}{3} + \eta , \frac{\pi}{2}]} 2^{n}
\cos^{n-1} \theta \sin \theta \int_{\psi=0}^{\theta}
   \sin^{n-2} \psi d\psi d \theta
\end{split}
\end{equation}

Now we will use the bound by Shannon for the first term. We can
apply the bound for $\theta < \pi /2$. Hence we  split the integral
into two terms to apply the bound. For the second term we will bound
$\sin \psi$ by $1$. This gives,

\begin{equation}
\begin{split}
\frac{V_{\oplus}'}{V_\oplus} \leq \frac{4}{\sqrt{\pi}}
\frac{\Gamma(\frac{n}{2}+1)}{\Gamma(\frac{n-1}{2})} \int_{\theta \in
\theta_{\oplus}' \cap [0, \frac{\pi}{3} + \eta ]} 2^{n} \cos^{n-1}
\theta \sin \theta \frac{\sin^{n-1} \theta}{(n-1) \cos \theta}
 d \theta \\
    + \frac{4}{\sqrt{\pi}}
\frac{\Gamma(\frac{n}{2}+1)}{\Gamma(\frac{n-1}{2})} \int_{\theta \in
\theta_{\oplus}' \cap [ \frac{\pi}{3} + \eta , \frac{\pi}{2}]} 2^{n}
\cos^{n-1} \theta \sin \theta \int_{\psi=0}^{\theta}
    d\psi d \theta
\end{split}
\end{equation}

Simplifying things further we get,

\begin{equation}
\begin{split}
\frac{V_{\oplus}'}{V_\oplus} \leq \frac{4}{\sqrt{\pi}}
\frac{\Gamma(\frac{n}{2}+1)}{\Gamma(\frac{n+1}{2})} \int_{\theta \in
\theta_{\oplus}' \cap [0, \frac{\pi}{3} + \eta ]} 2^{n-1} \sin^{n-1}
\theta \cos^{n-1} \theta \tan \theta
 d \theta \\
    + \frac{4}{\sqrt{\pi}}
\frac{(n-1)\Gamma(\frac{n}{2}+1)}{\Gamma(\frac{n+1}{2})}
\int_{\theta \in \theta_{\oplus}' \cap [ \frac{\pi}{3} + \eta ,
\frac{\pi}{2}]} 2^{n-1} \cos^{n-1} \theta \sin \theta \frac{\pi}{2}
d \theta
\end{split}
\end{equation}

Next we bound $\tan \theta$ in the first term by  $\tan (\pi/3 +
\eta)$ and $\sin \theta$ in the second term by $1$. We also bound
the factorial using the bound given in Urbanke. This yields,

\begin{equation}
\begin{split}
\frac{V_{\oplus}'}{V_\oplus}\leq \frac{4}{\sqrt{\pi}}
\frac{\frac{n}{2}(1 + \Gamma(\frac{n+1}{2})) \tan (\frac{\pi}{3} +
\eta)}{\Gamma(\frac{n+1}{2})} \int_{\theta \in \theta_{\oplus}' \cap
[0, \frac{\pi}{3} + \eta ]} (\sin 2\theta)^{n-1}
 d \theta \\
    + \frac{4}{\sqrt{\pi}}
\frac{(n-1)\frac{n}{2}(1 +
\Gamma(\frac{n+1}{2}))}{\Gamma(\frac{n+1}{2})} \int_{\theta \in
\theta_{\oplus}' \cap [ \frac{\pi}{3} + \eta ,
\frac{\pi}{2}]}\frac{\pi}{2} (2 \cos \theta)^{n-1}  d \theta
\end{split}
\end{equation}

Again we can see that since $\theta$ does not take the value $\pi/4$
we can bound the first term appropriately. In the second term the
maximum value of $\cos \theta$ can be used to  bounded it
appropriately. This is given below as follows.

\begin{equation}
\begin{split}
\frac{V_{\oplus}'}{V_\oplus} \leq \frac{4}{\sqrt{\pi}}
\frac{\frac{n}{2}(1 + \Gamma(\frac{n+1}{2})) \tan (\frac{\pi}{3} +
\eta)}{\Gamma(\frac{n+1}{2})}  (\sin (\frac{\pi}{2} - 2
\epsilon))^{n-1} \frac{\pi}{2}
\\
    + \frac{4}{\sqrt{\pi}}
\frac{(n-1)\frac{n}{2}(1 +
\Gamma(\frac{n+1}{2}))}{\Gamma(\frac{n+1}{2})} \frac{\pi}{2} [2 \cos
(\frac{\pi}{3} + \eta)]^{n-1} \frac{\pi}{2}
\end{split}
\end{equation}

From the above we can easily see, that both terms tend to $0$ as $n
\rightarrow \infty$. From this the lemma follows.

\end{proof}

\section{Application of Blichfeldt's Principle to Show Existence of Good Translations}

\begin{lem}\label{lem:blichtranslationrate} Let $\Lambda_n$ be a lattice having a fundamental
region $P_n$ and let $\det(\Lambda_n)$ be it's fundamental volume
and define
\begin{equation*}
\begin{split}P_n^* = \left\{ (\mathbf{s_1},\mathbf{s_2}) \in P_n \times P_n :
M_1(\Lambda_n,\mathbf{s_1}) \geq \frac{V_n(\sqrt{nP})}{ 8
\det(\Lambda_n)}; \right.\\ \left. M_2(\Lambda_n,\mathbf{s_2}) \geq
\frac{V_n(\sqrt{nP})}{8 \det(\Lambda_n)};
\frac{M_\oplus'(\Lambda_n,\mathbf{s_1},\mathbf{s_2})}{M_\oplus(\Lambda_n,\mathbf{s_1},\mathbf{s_2})}\leq
4 \frac{V_\oplus'}{V_\oplus} \right\}
\end{split}
\end{equation*}
Then \begin{equation*}V_{\oplus} \leq 2 \int_{P_n^*}
M_{\oplus}(\Lambda_n, \mathbf{s_1},\mathbf{s_2})
 dV(\mathbf{s_1},\mathbf{s_2}),\end{equation*} where $\delta_n > 0$
and can be made arbitrarily small for sufficiently large $n$
\end{lem}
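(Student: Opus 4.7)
The plan is a union-bound argument applied to Blichfeldt's principle: bound the integral of $M_\oplus$ over the three ``bad'' subsets of $P_n \times P_n$ (one per defining condition of $P_n^*$) by $V_\oplus/8$, $V_\oplus/8$, and $V_\oplus/4$ respectively, so that the complement $P_n^*$ must account for at least $V_\oplus/2$.

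First I would collect the averaging identities obtained from Blichfeldt's principle (Theorem~\ref{thm:blichfeldt}) and Fubini: $\int_{P_n} M_1(\Lambda_n,\mathbf{s_1})\, dV(\mathbf{s_1}) = V_n(\sqrt{nP})$, the analogous identity for $M_2$, the identity $\int_{P_n \times P_n} M_\oplus\, dV = V_\oplus$ from Corollary~\ref{cor:blichapp}, and the parallel identity $\int_{P_n \times P_n} M_\oplus'\, dV = V_\oplus'$ obtained by applying Blichfeldt to the integrand defining $V_\oplus'$ in Lemma~\ref{lem:Hyper Volume Concentration Lemma}. The key structural observation is that, since $\mathcal{C}_\oplus = \mathcal{C}_1 \times \mathcal{C}_2$ by definition, the counting function factorizes as $M_\oplus(\Lambda_n,\mathbf{s_1},\mathbf{s_2}) = M_1(\Lambda_n,\mathbf{s_1}) \cdot M_2(\Lambda_n,\mathbf{s_2})$.

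Let $A_1, A_2, A_3 \subset P_n \times P_n$ denote the subsets where the first, second, and third defining conditions of $P_n^*$ fail, so $P_n^* = (P_n \times P_n)\setminus(A_1 \cup A_2 \cup A_3)$. Then $A_1 = B_1 \times P_n$, where $B_1 = \{\mathbf{s_1} \in P_n : M_1(\Lambda_n,\mathbf{s_1}) < V_n(\sqrt{nP})/(8\det(\Lambda_n))\}$ has volume at most $\det(\Lambda_n)$. Using the factorization of $M_\oplus$ together with the uniform bound on $M_1$ over $B_1$ and the identity $\int_{P_n} M_2\, dV = V_n(\sqrt{nP})$ gives
\[
\int_{A_1} M_\oplus\, dV \;\leq\; \frac{V_n(\sqrt{nP})}{8\det(\Lambda_n)}\cdot \det(\Lambda_n)\cdot V_n(\sqrt{nP}) \;=\; \frac{V_\oplus}{8},
\]
and symmetrically $\int_{A_2} M_\oplus\, dV \leq V_\oplus/8$. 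On $A_3$, the defining inequality $M_\oplus'/M_\oplus > 4V_\oplus'/V_\oplus$ rearranges to $M_\oplus < (V_\oplus/(4V_\oplus'))\, M_\oplus'$, whence
\[
\int_{A_3} M_\oplus\, dV \;\leq\; \frac{V_\oplus}{4 V_\oplus'}\int_{A_3} M_\oplus'\, dV \;\leq\; \frac{V_\oplus}{4}.
\]

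Adding the three contributions yields $\int_{A_1 \cup A_2 \cup A_3} M_\oplus\, dV \leq V_\oplus/2$; subtracting from $\int_{P_n \times P_n} M_\oplus\, dV = V_\oplus$ gives $\int_{P_n^*} M_\oplus\, dV \geq V_\oplus/2$, i.e., the desired inequality $V_\oplus \leq 2\int_{P_n^*} M_\oplus\, dV$. There is no real obstacle: this is essentially bookkeeping once the averaging identities and the factorization $M_\oplus = M_1 M_2$ are in hand, and the constants $1/8$, $1/8$, $4$ appearing in the definition of $P_n^*$ are chosen precisely so that the three bad contributions sum to $V_\oplus/2$. The only point that requires minor care is the passage from Blichfeldt's statement on $\mathds{R}^n$ to the product integral over $P_n \times P_n$, which is a routine application of Fubini.
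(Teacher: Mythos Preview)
Your proposal is correct and follows essentially the same approach as the paper: both arguments use Blichfeldt's principle to obtain the averaging identities, the factorization $M_\oplus = M_1 M_2$, and the same three bounds $V_\oplus/8$, $V_\oplus/8$, $V_\oplus/4$ on the bad sets, differing only in cosmetic organization (the paper groups the first two bad sets as $(F_n\times G_n)^C$ before splitting via $(F_n^C\times P_n)\cup(P_n\times G_n^C)$, which is exactly your $A_1\cup A_2$).
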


\begin{proof}
Let us define the following sets,
\begin{equation*}
F_n = \left\{ \mathbf{s_1} \in P_n: M_1(\Lambda_n,\mathbf{s_1}) \geq
\frac{V_n(\sqrt{nP})}{8 \det(\Lambda_n)} \right\}
\end{equation*}

\begin{equation*}
G_n = \left\{ \mathbf{s_2} \in P_n: M_2(\Lambda_n,\mathbf{s_2}) \geq
\frac{V_n(\sqrt{nP})}{8 \det(\Lambda_n)} \right\}
\end{equation*}

\begin{equation*}
O_n = \left\{ (\mathbf{s_1},\mathbf{s_2}) \in P_n \times P_n:
 \frac{M_{\oplus}'(\Lambda_n,
\mathbf{s_1},\mathbf{s_2})}{M_{\oplus}(\Lambda_n,
\mathbf{s_1},\mathbf{s_2})}  \leq 4 \frac{ V_\oplus'}{V_\oplus}
\right\}
\end{equation*}

Therefore $P_n^* = (F_n \times G_n) \bigcap O_n .$ Define the
complements here. Hence
\begin{equation}
P_n \times P_n = P_n^* \bigcup \left\{(F_n \times G_n) \cap O_n^C
\right\} \bigcup \left\{(F_n \times G_n)^C \right\}
\end{equation}
Hence we have
\begin{equation}
V_{\oplus} = \int_{P_n \times P_n} M_{\oplus}(\Lambda_n,
\mathbf{s_1},\mathbf{s_2})
 dV(\mathbf{s_1})dV(\mathbf{s_2})
 \end{equation}
 \begin{equation*}
 \begin{split}
  V_{\oplus}  \leq &
 \int_{P_n^*}
M_{\oplus}(\Lambda_n, \mathbf{s_1},\mathbf{s_2})
 dV(\mathbf{s_1})dV(\mathbf{s_2})
 + \int_{O_n^C}
M_{\oplus}(\Lambda_n, \mathbf{s_1},\mathbf{s_2})
 dV(\mathbf{s_1})dV(\mathbf{s_2}) \\
 & + \int_{(F_n\times G_n)^C}
M_{\oplus}(\Lambda_n, \mathbf{s_1},\mathbf{s_2})
 dV(\mathbf{s_1})dV(\mathbf{s_2})
 \end{split}
 \end{equation*}

We can replace the second integral, using the condition that  the
translations are not in the set $O_n$, to get a bound on
$M_{\oplus}$ as shown below,

 \begin{equation*}
  \begin{split}
V_{\oplus}  \leq &
 \int_{P_n^*}
M_{\oplus}(\Lambda_n, \mathbf{s_1},\mathbf{s_2})
 dV(\mathbf{s_1})dV(\mathbf{s_2})
  + \frac{V_{\oplus}}{4 V_{\oplus}'}\int_{O_n^C}
M_{\oplus}'(\Lambda_n, \mathbf{s_1},\mathbf{s_2})
 dV(\mathbf{s_1})dV(\mathbf{s_2})\\
 & + \int_{(F_n\times G_n)^C}
\left( \sum_{h_1 \in \Lambda_n} \sum_{h_2 \in \Lambda_n} f(h_1 +
\mathbf{s_1}, h_2 + \mathbf{s_2}) \right)
 dV(\mathbf{s_1})dV(\mathbf{s_2})
  \end{split}
  \end{equation*}

Since, $O_n^C \subset P_n \times P_n$, we can bound the second
integral again as shown below.
\begin{equation*}
 \begin{split}
V_{\oplus} \leq & \int_{P_n^*} M_{\oplus}(\Lambda_n,
\mathbf{s_1},\mathbf{s_2})
 dV(\mathbf{s_1})dV(\mathbf{s_2})  + \frac{V_{\oplus}}{4 V_{\oplus}'}\int_{P_n\times P_n}
M_{\oplus}'(\Lambda_n, \mathbf{s_1},\mathbf{s_2})
 dV(\mathbf{s_1})dV(\mathbf{s_2})\\
 & + \int_{(F_n\times G_n)^C}
\left( \sum_{h_1 \in \Lambda_n} \sum_{h_2 \in \Lambda_n}
\chi_{T}(h_1 + \mathbf{s_1}) \chi_T (h_2 + \mathbf{s_2}) \right)
 dV(\mathbf{s_1})dV(\mathbf{s_2})
 \end{split}
  \end{equation*}

By using Blichfeldt's principle and following similar steps as in
Corollary 4, we can show

 $ \int_{P_n\times P_n}
M_{\oplus}'(\Lambda_n, \mathbf{s_1},\mathbf{s_2})
 dV(\mathbf{s_1})dV(\mathbf{s_2}) = V_{\oplus}'$.

 \begin{equation*}
  \begin{split}
V_{\oplus} \leq & \int_{P_n^*} M_{\oplus}^{\Delta}(\Lambda_n,
\mathbf{s_1},\mathbf{s_2})
 dV(\mathbf{s_1})dV(\mathbf{s_2})  + \frac{V_{\oplus}}{4 } \\
 & + \int_{(F_n\times G_n)^C}
\left( \sum_{h_1 \in \Lambda_n} \sum_{h_2 \in \Lambda_n}
\chi_{T}(h_1 + \mathbf{s_1}) \chi_T (h_2 + \mathbf{s_2}) \right)
 dV(\mathbf{s_1})dV(\mathbf{s_2})
 \end{split}
 \end{equation*}

Next we change the double summation into a product of two
summations, to get
 \begin{equation*}
  \begin{split}
V_{\oplus} \leq & \int_{P_n^*} M_{\oplus}^{\Delta}(\Lambda_n,
\mathbf{s_1},\mathbf{s_2})
 dV(\mathbf{s_1})dV(\mathbf{s_2})  + \frac{V_{\oplus}}{4 }\\
 & + \int_{(F_n\times G_n)^C}
\left( \sum_{h_1 \in \Lambda_n} \chi_{T}(h_1 +
\mathbf{s_1})\right)\left( \sum_{h_2 \in \Lambda_n}
 \chi_T (h_2 + \mathbf{s_2}) \right)
 dV(\mathbf{s_1})dV(\mathbf{s_2})
 \end{split}
 \end{equation*}

The summation $  \sum_{h_1 \in \Lambda_n} \chi_{T}(h_1 +
\mathbf{s_1}) $ can be seen to count the number of lattice points in
the sphere $T_n$, for the translation $\mathbf{s_1}$. This is by
definition equivalent to $M(\Lambda_n , \mathbf{s_1})$. Similarly we
can replace the other summation by $M(\Lambda_n , \mathbf{s_2})$, to
get \begin{equation*}
  \begin{split}
V_{\oplus} \leq \int_{P_n^*} M_{\oplus}^{\Delta}(\Lambda_n,
\mathbf{s_1},\mathbf{s_2})
 dV(\mathbf{s_1})dV(\mathbf{s_2})  + \frac{V_{\oplus}}{4 }
  + \int_{(F_n\times G_n)^C}
M(\Lambda_n , \mathbf{s_1})M(\Lambda_n , \mathbf{s_2})
 dV(\mathbf{s_1})dV(\mathbf{s_2})
 \end{split}
 \end{equation*}
Since $(F_n \times G_n)^C \subset (F_n \times P_n)\cup(P_n\times
G_n)$ we can bound the integral to get
 \begin{equation*}
  \begin{split}
V_{\oplus} \leq & \int_{P_n^*} M_{\oplus}^{\Delta}(\Lambda_n,
\mathbf{s_1},\mathbf{s_2})
 dV(\mathbf{s_1})dV(\mathbf{s_2})
 + \frac{V_{\oplus}}{4 }
 + \int_{(F_n^C\times P_n)} M(\Lambda_n , \mathbf{s_1})M(\Lambda_n
, \mathbf{s_2})
 dV(\mathbf{s_1})dV(\mathbf{s_2}) \\
 &+ \int_{(P_n\times G_n^C)}
M(\Lambda_n , \mathbf{s_1})M(\Lambda_n , \mathbf{s_2})
 dV(\mathbf{s_1})dV(\mathbf{s_2})
 \end{split}
 \end{equation*}
 We next use Blichfeldt's principle to simplify the integrals.
\begin{equation*}
 \begin{split}
V_{\oplus} \leq &
 \int_{P_n^*}
M_{\oplus}^{\Delta}(\Lambda_n, \mathbf{s_1},\mathbf{s_2})
 dV(\mathbf{s_1})dV(\mathbf{s_2})
 + \frac{V_{\oplus}}{4 }
 + V_n(\sqrt{nP}) \int_{F_n^C} M(\Lambda_n , \mathbf{s_1})
 dV(\mathbf{s_1}) \\&
+ V_n(\sqrt{nP}) \int_{ G_n^C} M(\Lambda_n , \mathbf{s_2})
 dV(\mathbf{s_2})
 \end{split}
 \end{equation*}
 We use the condition $\mathbf{s_1} \not\in F_n$ and $\mathbf{s_2} \not\in G_n
 $, to get
\begin{equation*}
 \begin{split}
 V_{\oplus} \leq &
 \int_{P_n^*}
M_{\oplus}(\Lambda_n, \mathbf{s_1},\mathbf{s_2})
 dV(\mathbf{s_1})dV(\mathbf{s_2})
 + \frac{V_{\oplus}}{4 }  + V_n(\sqrt{nP}) \frac{V_n(\sqrt{nP})}{8 \det(\Lambda_n)} \int_{F_n^C}
 dV(\mathbf{s_1}) \\&
+ V_n(\sqrt{nP}) \frac{V_n(\sqrt{nP})}{8 \det(\Lambda_n)} \int_{
G_n^C}
 dV(\mathbf{s_2})
 \end{split}
\end{equation*}
Since $F_n^C, G_n^C \subset P_n$ and $\int_{P_n}
dV(\mathbf{s_1}),\int_{P_n} dV(\mathbf{s_2}) = \det (\Lambda_n)$, we
obtain,
\begin{equation*}
 \begin{split}
 V_{\oplus} \leq &
 \int_{P_n^*}
M_{\oplus}(\Lambda_n, \mathbf{s_1},\mathbf{s_2})
 dV(\mathbf{s_1})dV(\mathbf{s_2})
 + \frac{V_{\oplus}}{4 }  + V_n(\sqrt{nP}) \frac{V_n(\sqrt{nP})}{8 }
+ V_n(\sqrt{nP}) \frac{V_n(\sqrt{nP})}{8 }
 \end{split}
\end{equation*}
 Finally using $V_{\oplus} = (V_n(\sqrt{nP}))^2 $, we  obtain
\begin{equation*}V_{\oplus} \leq 2 \int_{P_n^*} M_{\oplus}(\Lambda_n,
\mathbf{s_1},\mathbf{s_2})
 dV(\mathbf{s_1})dV(\mathbf{s_2})\end{equation*}
\end{proof}

From the above lemma it can be seen that the measure of $P_n^*$ must
be non-zero and hence, there must be at least some translations
$(\mathbf{s_1},\mathbf{s_2})$ of the lattice, where the requirements
hold.

\section{Minkowski-Hlawka Theorem to Show Good Lattices Exist}

\begin{lem}\label{lem:existencelatticeMinkowski}
There exists translational vectors $\mathbf{s_1^*}$,
$\mathbf{s_2^*}$ such that the following holds
\begin{equation*}
\begin{split}
 \frac{1}{M_\oplus(\Lambda_n,\mathbf{s_1^*},\mathbf{s_2^*})}
\sum_{(\mathbf{x_1},\mathbf{x_2}) \in \mathcal{C}_{\oplus}^{\Delta}}
 \sum_{g  \in \Lambda_n \setminus
\{0 \}} p_\theta\left( \mathbf{x_1} + \mathbf{x_2},g + \mathbf{x_1}
+ \mathbf{x_2} \right)\chi_{T_{\sqrt{2P}}^{\Delta}}(g + \mathbf{x_1}
+ \mathbf{x_2})  \\ \leq 2
 \frac{(n-1)\pi^{\frac{n - 1}{2}}{(n(2P+\delta))^{n/2}}}{d_n n \Gamma (\frac{n +
 1}{2})}   \int_{0}^{\theta} \sin^{n-2}(x) dx
 \end{split}
\end{equation*}

and
\begin{equation*}
\begin{split} M_1(\Lambda_n,\mathbf{s_1^*}) \geq
\frac{V_n(\sqrt{nP})}{8 \det(\Lambda_n)};
M_2(\Lambda_n,\mathbf{s_2^*}) \geq \frac{V_n(\sqrt{nP})}{8
\det(\Lambda_n)};
\frac{M_\oplus'(\Lambda_n,\mathbf{s_1^*},\mathbf{s_2^*})}{M_\oplus(\Lambda_n,\mathbf{s_1^*},\mathbf{s_2^*})}\leq
4 \frac{V_\oplus'}{V_\oplus}
\end{split}
\end{equation*}

\end{lem}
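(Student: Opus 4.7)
The three cardinality inequalities in the second display of the statement are exactly the membership conditions defining $P_n^*$ in Lemma~9, and that lemma already shows $P_n^*$ has positive measure. So the task reduces to showing that \emph{within} $P_n^*$ there exists at least one pair for which the summation bound in the first display also holds. I plan to prove this by a weighted averaging argument. Let $S(\mathbf{s_1},\mathbf{s_2})$ denote the double sum appearing in the numerator on the left of the first display. Assign each $(\mathbf{s_1},\mathbf{s_2}) \in P_n^*$ the weight $M_\oplus(\Lambda_n,\mathbf{s_1},\mathbf{s_2})$; under this weighting the average of $S/M_\oplus$ over $P_n^*$ equals $\int_{P_n^*} S\,dV\,/\,\int_{P_n^*} M_\oplus\,dV$, so some pair $(\mathbf{s_1^*},\mathbf{s_2^*})$ in $P_n^*$ realizes a ratio no larger than this. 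Lemma~9 already supplies the denominator bound $\int_{P_n^*} M_\oplus\,dV \geq V_\oplus/2$, so the bulk of the work is to upper bound $\int_{P_n^*} S\,dV$.

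Since $S \geq 0$, the first step is to enlarge the integration region from $P_n^*$ to all of $P_n \times P_n$. Writing each codeword pair as $(\mathbf{x_1},\mathbf{x_2}) = (h_1+\mathbf{s_1},h_2+\mathbf{s_2})$ with $h_1,h_2 \in \Lambda_n$, I would then apply Blichfeldt's principle once in $\mathbf{s_1}$ and once in $\mathbf{s_2}$, exactly as in Corollary~8. This trades the double lattice-point sum together with the double translation integral for a plain double integral over $(\mathbf{x_1},\mathbf{x_2}) \in T_{\sqrt{P}} \times T_{\sqrt{P}}$, leaving the sum over $g \in \Lambda_n \setminus \{0\}$ intact. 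The Minkowski-Hlawka theorem, applied to the lattice $\Lambda_n$ of determinant $d_n$, then upper bounds that remaining lattice sum by the corresponding integral over $g \in \RR^n$.

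After the substitution $\mathbf{y} = g + \mathbf{x_1}+\mathbf{x_2}$ the variables decouple and the triple integral reduces to
\[
\int_{T_{\sqrt{P}}\times T_{\sqrt{P}}}\!\left(\int_{T^\Delta_{\sqrt{2P}}} p_\theta(\mathbf{x_1}+\mathbf{x_2},\mathbf{y})\,d\mathbf{y}\right) d(\mathbf{x_1},\mathbf{x_2}).
\]
For each fixed $\mathbf{x_1}+\mathbf{x_2}$, $p_\theta(\cdot,\mathbf{y})$ is essentially an indicator for $\mathbf{y}$ lying in a cone of half-angle $\theta$ about the projected axis $\pi(\mathbf{x_1}+\mathbf{x_2})$, so parameterizing $\mathbf{y}$ in spherical coordinates about that axis peels off the factor $\int_0^\theta \sin^{n-2}(x)\,dx$ times the appropriate radial surface-area measure on the thin shell. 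The outer $T_{\sqrt{P}} \times T_{\sqrt{P}}$ integration contributes a factor of $V_\oplus$, which cancels against the $V_\oplus/2$ from the denominator (producing the prefactor $2$ in the target display), and the remaining gamma-function and volume constants assemble into the stated right-hand side.

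The main obstacle will be the clean bookkeeping in this last spherical-coordinate evaluation: unwinding the projection $\pi$ in the definition of $p_\theta$, producing the $(n{-}1)$-sphere surface constant $\pi^{(n-1)/2}/\Gamma((n{+}1)/2)$, and verifying that the radial factor $(n(2P+\delta))^{n/2}$ arises from integrating out the thin shell of outer radius $\sqrt{n(2P+\delta)}$. A secondary point to check is that restricting the sum $S$ to $\mathcal{C}_\oplus^\Delta$ rather than all of $\mathcal{C}_\oplus$ is automatically captured by the factor $\chi_{T^\Delta_{\sqrt{2P}}}$ already present in the summand, so no accuracy is lost in enlarging the outer sum when bounding.
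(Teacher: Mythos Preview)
Your plan is essentially the same as the paper's proof: both bound $\int_{P_n^*} S\,dV$ by enlarging to $P_n\times P_n$, applying Blichfeldt twice in $(\mathbf{s_1},\mathbf{s_2})$, then Minkowski--Hlawka in $g$, then evaluating the resulting shell integral in spherical coordinates; both use Lemma~\ref{lem:blichtranslationrate} for $\int_{P_n^*}M_\oplus\,dV\geq V_\oplus/2$ and conclude by the same weighted-average existence argument. One minor point to watch in your final step: $p_\theta(\mathbf{x},\mathbf{y})=\Pr(\pi(\mathbf{x})+Z^n\in B_\theta(\mathbf{y}))$ is not literally an indicator in $\mathbf{y}$ --- the paper handles this by conditioning on $Z^n=\mathbf{z}$ first (then it \emph{is} an indicator, the $\mathbf{y}$-integral becomes the cone-cap area independent of $\mathbf{z}$, and the outer expectation in $\mathbf{z}$ integrates to one), which is the ``unwinding'' you already flagged as the main bookkeeping task.
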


\begin{proof}

 First let us define the function $f(z)$ as  follows
\begin{equation*}
\begin{split}
 f(\zv) = &  \int_{\mathds{R}^n} \int_{\mathds{R}^n}
  p_\theta\left(    \mathbf{u}  +  \mathbf{v},\zv + \mathbf{u} +
\mathbf{v} \right)  \chi_{T_{\sqrt{2P}}^{\Delta}}(g + \mathbf{u} +
\mathbf{v}) \chi_{T}( \mathbf{u} ) \chi_{T}( \mathbf{v} )
\chi_{T_{\sqrt{2P}}^{\Delta}}(  \mathbf{u} + \mathbf{v})
dV(\mathbf{u,v})
\end{split}
\end{equation*}
Clearly $f(z)$ is a nonnegative integrable function and has bounded
support. Hence we can apply the Minkowski-Hlawka theorem and find a
lattice $\Lambda_n^*$, such that
\begin{equation*}
\sum_{g  \in \Lambda_n \setminus \{0 \}} f(g) \leq \frac{1}{d_n}
\int_{\mathds{R}^n} f(\zv) dV(\zv)
\end{equation*}
Consider the integral
\begin{equation*}\begin{split}
I =& \int_{P_n^*} M_\oplus(\Lambda_n,\mathbf{s_1},\mathbf{s_2})
\left[\frac{1}{M_\oplus(\Lambda_n,\mathbf{s_1},\mathbf{s_2})}
\sum_{(\mathbf{x_1},\mathbf{x_2}) \in \mathcal{C}_{\oplus}^{\Delta}}
 \sum_{g  \in \Lambda_n \setminus
\{0 \}} p_\theta\left( \mathbf{x_1} + \mathbf{x_2}, g + \mathbf{x_1}
+ \mathbf{x_2} \right)\right.\\&\left.
\chi_{T_{\sqrt{2P}}^{\Delta}}(g + \mathbf{x_1} +
\mathbf{x_2})\right] dV(\mathbf{s_1,s_2})\end{split}
\end{equation*}
\begin{equation*}
I = \int_{P_n^*} \sum_{(\mathbf{x_1},\mathbf{x_2}) \in
\mathcal{C}_{\oplus}^{\Delta}}
 \sum_{g  \in \Lambda_n \setminus
\{0 \}} p_\theta\left( \mathbf{x_1} + \mathbf{x_2}, g +\mathbf{x_1}
+ \mathbf{x_2} \right)\chi_{T_{\sqrt{2P}}^{\Delta}}(g + \mathbf{x_1}
+ \mathbf{x_2}) dV(\mathbf{s_1,s_2})
\end{equation*}
\begin{equation*}
\leq \int_{P_n \times P_n} \sum_{(\mathbf{x_1},\mathbf{x_2}) \in
\mathcal{C}_{\oplus}^{\Delta}}
 \sum_{g  \in \Lambda_n \setminus
\{0 \}} p_\theta\left( \mathbf{x_1} + \mathbf{x_2}, g + \mathbf{x_1}
+ \mathbf{x_2} \right)\chi_{T_{\sqrt{2P}}^{\Delta}}(g + \mathbf{x_1}
+ \mathbf{x_2}) dV(\mathbf{s_1,s_2})
\end{equation*}
The above follows as $P_n^* \subseteq P_n \times P_n$.
\begin{equation*}
\begin{split}
& I \leq \int_{P_n \times P_n} \sum_{(\mathbf{h_1},\mathbf{h_2}) \in
\Lambda_n \times \Lambda_n}
 \sum_{g  \in \Lambda_n \setminus
\{0 \}} p_\theta\left( h_1 + \mathbf{s_1} + h_2 +  \mathbf{s_2},g +
h_1 + \mathbf{s_1} + h_2 + \mathbf{s_2} \right) \\&
\chi_{T_{\sqrt{2P}}^{\Delta}}(g + h_1 + \mathbf{s_1} + h_2 +
\mathbf{s_2}) \chi_{T}(h_1 + \mathbf{s_1} ) \chi_{T}(h_2 +
\mathbf{s_2} )\chi_{T_{\sqrt{2P}}^{\Delta}}( h_1 + \mathbf{s_1} +
h_2 + \mathbf{s_2}) dV(\mathbf{s_1,s_2})
\end{split}
\end{equation*}

In the above equation we substituted $\mathbf{x_1} = h_1 +
\mathbf{s_1}$ and $\mathbf{x_2} = h_2 + \mathbf{s_2}$.  Next
applying the Blichfeldt's principle twice we obtain,
\begin{equation*}
\begin{split}
 I \leq &\int_{\mathds{R}^n} \int_{\mathds{R}^n}
 \sum_{g  \in \Lambda_n \setminus
\{0 \}} p_\theta\left( \mathbf{u}  +  \mathbf{v}, g  + \mathbf{u} +
\mathbf{v} \right)  \chi_{T_{\sqrt{2P}}^{\Delta}}(g + \mathbf{u} +
\mathbf{v}) \chi_{T}( \mathbf{u} ) \chi_{T}( \mathbf{v} )
\chi_{T_{\sqrt{2P}}^{\Delta}}(  \mathbf{u} + \mathbf{v})
dV(\mathbf{u,v})
\end{split}
\end{equation*}

Here $\uv = h_1 + \mathbf{s_1}$ and $\vv = h_2 + \mathbf{s_2}$. We
can next take the summation outside the double integral, as we are
dealing with a finite number of non-zero sums. This gives,
\begin{equation*}
\begin{split}
 I \leq & \sum_{g  \in \Lambda_n \setminus
\{0 \}} \int_{\mathds{R}^n} \int_{\mathds{R}^n}
  p_\theta\left(  \mathbf{u}  +  \mathbf{v},g  + \mathbf{u} +
\mathbf{v} \right)  \chi_{T_{\sqrt{2P}}^{\Delta}}(g + \mathbf{u} +
\mathbf{v}) \chi_{T}( \mathbf{u} ) \chi_{T}( \mathbf{v} )
\chi_{T_{\sqrt{2P}}^{\Delta}}(  \mathbf{u} + \mathbf{v})
dV(\mathbf{u,v})
\end{split}
\end{equation*}

Next we use the Minkowski-Hlawka theorem to establish that there
exists at least one lattice $\Lambda_n^*$ such that the summation
can be bounded by an integral, as shown below. In short the
Minkowski-Hlawka theorem gives an existence result that such a
lattice can be found, but does not give a way to find such a
lattice.
\begin{equation*}
\begin{split}
 I \leq  \frac{1}{d_n} \int \int_{\mathds{R}^n} \int_{\mathds{R}^n}
  p_\theta\left(  \mathbf{u} +
\mathbf{v}, \mathbf{w}   \right)
\chi_{T_{\sqrt{2P}}^{\Delta}}(\mathbf{w} ) \chi_{T}( \mathbf{u} )
\chi_{T}( \mathbf{v} ) \chi_{T_{\sqrt{2P}}^{\Delta}}( \mathbf{u} +
\mathbf{v}) dV(\mathbf{u,v})dV(\mathbf{w})
\end{split}
\end{equation*}

In the above we have replaced $\mathbf{w} + \uv + \vv$ with
$\mathbf{w}$. Again we change the order of integration to get,

\begin{equation*}
\begin{split}
 I \leq  \frac{1}{d_n} \int_{\mathds{R}^n} \int_{\mathds{R}^n} \left[ \int
  p_\theta\left(  \mathbf{u} +
\mathbf{v}, \mathbf{w}    \right)
\chi_{T_{\sqrt{2P}}^{\Delta}}(\mathbf{w} )dV(\mathbf{w})\right]
\chi_{T}( \mathbf{u} ) \chi_{T}( \mathbf{v} )
\chi_{T_{\sqrt{2P}}^{\Delta}}( \mathbf{u} + \mathbf{v})
dV(\mathbf{u,v})
\end{split}
\end{equation*}

The inner integral can be seen to be independent of $\uv + \vv$, and
the outer integral can be seen to be lesser than $V_{\oplus}$. Hence
we can express the resultant integral as,
\begin{equation*}
\begin{split}
 I \leq  \frac{V_{\oplus}}{d_n}  \int
  p_\theta\left(  \mathbf{s_0} ,  \mathbf{w}   \right)  \chi_{T_{\sqrt{2P}}^{\Delta}}(\mathbf{w}
)dV(\mathbf{w}).
\end{split}
\end{equation*}

Hence we next evaluate the integral using similar steps as in
\cite{urbanke98}, first by changing to spherical coordinates.
\begin{equation*}
\begin{split}
 I \leq  \frac{V_{\oplus}}{d_n}
 \int_{\sqrt{n(2P-\delta)}}^{\sqrt{n(2P+\delta)}} \left[\int_{\mathbf{w}:\|\mathbf{w}\| = r}
  p_\theta\left(  \mathbf{s_0} ,  \mathbf{w}   \right)  \chi_{T_{\sqrt{2P}}^{\Delta}}(\mathbf{w}
)dA(\mathbf{w}) \right] dr,
\end{split}
\end{equation*}
where $dA(\mathbf{w})$ is the $(n-1)$-dimensional volume element of
a thin spherical element at a distance $r$ from the origin.  Next we
use the definition of $p_\theta$ to get,

\begin{equation*}
\begin{split}
 I \leq  \frac{V_{\oplus}}{d_n}
 \int_{\sqrt{n(2P-\delta)}}^{\sqrt{n(2P+\delta)}} \left[\int_{\mathbf{w}:\|\mathbf{w}\| = r}
    \mbox{Pr}\left( \pi( \mathbf{s_0}) +
Z^n \in B_{\theta}(\mathbf{w}) \right)
\chi_{T_{\sqrt{2P}}^{\Delta}}(\mathbf{w} )dA(\mathbf{w}) \right] dr.
\end{split}
\end{equation*}

We evaluate the probability by conditioning on $Z^n$.
\begin{equation*}
\begin{split}
 I \leq  \frac{V_{\oplus}}{d_n}
 \int_{\sqrt{n(2P-\delta)}}^{\sqrt{n(2P+\delta)}} \left[\int_{\mathbf{w}:\|\mathbf{w}\| =
 r} \left[
    \int \mbox{Pr}\left( \pi( \mathbf{s_0}) +
Z^n \in B_{\theta}(\mathbf{w}) | Z^n = \zv \right) \mbox{Pr}(Z^n =
\zv) dV(\zv) \right] \right. \\ \left.
\chi_{T_{\sqrt{2P}}^{\Delta}}(\mathbf{w} )dA(\mathbf{w}) \right] dr.
\end{split}
\end{equation*}

Since the function is non-negative, the order of integration can be
interchanged to obtain

\begin{equation*}
\begin{split}
 I \leq  \frac{V_{\oplus}}{d_n} \int
 \left[ \int_{\sqrt{n(2P-\delta)}}^{\sqrt{n(2P+\delta)}} \int_{\mathbf{w}:\|\mathbf{w}\| =
 r} \left[
     \mbox{Pr}\left( \pi( \mathbf{s_0}) +
Z^n \in B_{\theta}(\mathbf{w}) | Z^n = \zv \right) \right] \right. \\
\left. \chi_{T_{\sqrt{2P}}^{\Delta}}(\mathbf{w} )dA(\mathbf{w}) dr
\right] \mbox{Pr}(Z^n = \zv) dV(\zv).
\end{split}
\end{equation*}

Due to the conditioning on $Z^n$, the conditional probability
becomes deterministic and is equivalent to the cross sectional area
of a hyper cone of half-angle $\theta$, intersecting with a
hyper-sphere of radius $r$. Note that in  the result in
\cite{urbanke98}, the area of cross section must be a function of
$r$ and not of $\sqrt{nS'}$. This gives,

\begin{equation*}
\begin{split}
 I \leq  \frac{V_{\oplus}}{d_n} \left[\int \mbox{Pr}(Z^n = \zv) dV(\zv)\right]
 \int_{\sqrt{n(2P-\delta)}}^{\sqrt{n(2P+\delta)}} \left[\frac{(n-1)\pi^{\frac{n - 1}{2}}{r^{n-1}}}{\Gamma (\frac{n +
 1}{2})}\int_{0}^{\theta} \sin^{n-2}(x) dx
  \right] dr  .
\end{split}
\end{equation*}

This in turn yields
\begin{equation*}
\begin{split}
 I \leq  \frac{V_{\oplus}}{d_n}
 \left[\frac{(n-1)\pi^{\frac{n - 1}{2}}{\left((n(2P+\delta))^{n/2}- (n(2P-\delta))^{n/2}\right)}}{n \Gamma (\frac{n +
 1}{2})}   \int_{0}^{\theta} \sin^{n-2}(x) dx
  \right].
\end{split}
\end{equation*}

\begin{equation*}
\begin{split}
 I \leq  \frac{V_{\oplus}}{d_n}
 \left[\frac{(n-1)\pi^{\frac{n - 1}{2}}{(n(2P+\delta))^{n/2}}}{n \Gamma (\frac{n +
 1}{2})}   \int_{0}^{\theta} \sin^{n-2}(x) dx
  \right].
\end{split}
\end{equation*}

Next using Lemma~\ref{lem:blichtranslationrate} we can bound the
resultant integral to get,

\begin{equation*}
\begin{split}
 I \leq & 2 \int_{P_n^*} M_{\oplus}(\Lambda_n , \mathbf{s_1},\mathbf{s_2}) \left[
 \frac{(n-1)\pi^{\frac{n - 1}{2}}{(n(2P+\delta))^{n/2}}}{d_n n \Gamma (\frac{n +
 1}{2})}   \int_{0}^{\theta} \sin^{n-2}(x) dx\right] dV(\mathbf{s_1},\mathbf{s_2})
  .
\end{split}
\end{equation*}
Now comparing the integrals and also knowing that the measure of
$P_n^*$ is non-zero we can see that there must be at least one such
translational vector pair $(\mathbf{s_1},\mathbf{s_2})$ such that
the required assertion of the lemma holds.
\end{proof}

\end{document}